\newtheorem{theorem}{Theorem}[section]
\newtheorem{lemma}[theorem]{Lemma}
\newtheorem{claim}[theorem]{Claim}
\newtheorem{definition}[theorem]{Definition}
\newtheorem{example}[theorem]{Example}
\newtheorem{remark}[theorem]{Remark}
\numberwithin{equation}{section}
\newcommand{\R}{{\mathbb{R}}}
\newcommand{\N}{{\mathbb N}}
\newcommand{\ie}{{\it i.e.}}
\newcommand{\Ext}{\mathrm{ext}}
\newcommand{\Int}{\mathrm{int}}
\newcommand{\ra}{\rightarrow}
\newcommand{\ol}{\overline}
\newcommand{\dom}{\mathbf{dom}}
\newcommand{\Cont}{\textrm{Cont}}
\begin{document}

\begin{abstract}
In this paper, we focus on mitigating the computational complexity in abstraction-based controller synthesis for interconnected control systems. To do so, we provide a compositional framework for the construction of abstractions for interconnected systems and a bottom-up controller synthesis scheme. In particular, we propose a notion of approximate composition which makes it possible to compute an abstraction of the global interconnected system from the abstractions (possibly of different types) of its components. Finally, by leveraging our notion of approximate composition, we propose a bottom-up approach for the synthesis of controllers enforcing decomposable safety specifications. The effectiveness of the proposed results is demonstrated using two case studies (viz., DC microgrid and traffic network) by comparing them with different abstraction and controller synthesis schemes.           
\end{abstract}

\title[Compositional Abstraction-based Synthesis for Interconnected Systems]{Compositional Abstraction-based Synthesis for Interconnected Systems: An approximate composition approach$^\star$}
\author[A. Saoud]{Adnane Saoud$^{1,\dagger}$}
\author[P. Jagtap]{Pushpak Jagtap$^{2,\dagger}$}
\author[M. Zamani]{Majid Zamani$^{3,4}$}
\author[A. Girard]{Antoine Girard$^1$}
\address{$^1$Laboratoire des Signaux et Systemes (L2S) - CNRS, 91192 Gif sur Yvette, France.}
\email{\{adnane.saoud, antoine.girard\}@l2s.centralesupelec.fr}
\address{$^2$Department of Electrical and Computer Engineering, Technical University of Munich, Germany.}
\email{pushpak.jagtap@tum.de}
\address{$^3$Computer Science Department, University of Colorado Boulder, USA.}
\email{Majid.Zamani@colorado.edu}
\address{$^3$Computer Science Department, Ludwig Maximilian University of Munich, Germany.}
\thanks{$^\dagger$ The authors contributed equally to this work.}
\thanks{$^\star$ This work was supported in part by the H2020 ERC Starting
Grant AutoCPS, the German Research Foundation (DFG) through the grant ZA 873/1-1, and the TUM International Graduate School of Science and Engineering (IGSSE). This project has also received funding from the European Research Council (ERC) under the European Union's Horizon 2020 research and innovation programme (grant agreement No. 725144).}
\maketitle


\section{Introduction}
Control and verification of dynamical systems using discrete abstractions (a.k.a. symbolic models) and formal methods have been an ongoing research area in recent years {(see \cite{tabuada2009verification,belta2017formal} and the references therein)}.
In such approaches, a discrete abstraction (i.e., a system with the finite number of states and inputs) is constructed from the original system. When the concrete and abstract systems are related by some relations such as simulation, alternating simulation or their approximate versions, the discrete controller synthesized for the abstraction can be refined into a hybrid controller for the original system. The use of discrete abstractions principally enables the use of techniques developed in the areas of supervisory control of discrete event systems \cite{cassandras2009introduction} and algorithmic game theory \cite{bloem2012synthesis}. The construction of such discrete abstractions is often based on a discretization of the state and input sets. Due to those sets discretization, symbolic control techniques suffer severely from the curse of dimensionality (i.e, the computational complexity for synthesizing abstractions and controllers grows exponentially with the state and input spaces dimension).

To tackle this problem, several compositional approaches were recently proposed. The authors in \cite{tazaki2008bisimilar} proposed a compositional approach for finite-state abstractions of a network of control systems based on the notion of interconnection-compatible approximate bisimulation. The results in \cite{pola7} provide compositional construction of approximately bisimilar finite abstractions for networks of discrete-time control systems under some incremental stability property. In \cite{mallik2018compositional}, the notion of (approximate) disturbance simulation was used for the compositional synthesis of continuous-time systems, where the states of the neighboring components were modeled as disturbance signals. 
In \cite{zamani2017compositional}, authors provide compositional abstraction using dissipativity approach. 
The authors in~\cite{dallal2015compositional,kim2017small,saoud2018contract,saoud2018contractv,saoud2019contract} use contract-based design and assume-guarantee reasoning to provide compositional construction of symbolic controllers. 

In this paper, we provide a compositional abstraction-based controller synthesis framework for a composition of $N$ control systems. The main contributions of the work are divided into three parts. First, we introduce a notion of approximate composition, while the classical exact composition of components requires the inputs and outputs of neighboring components to be equal, we propose a notion of approximate composition allowing the distance between inputs and outputs of neighboring components to be bounded by a given parameter. The proposed notion enables the composition of control systems (possibly of different types) which allows for more flexibility in the design of the overall symbolic model because each component may be suitable for a particular type of abstraction. Second, with the help of the aforementioned notion, we provide results on the compositional construction of abstractions for interconnected systems. Indeed, given a collection of components, where each concrete component is related to its abstraction by an approximate (alternating) simulation relation, we show how the parameter of the composition of the abstractions needs to be chosen in order to ensure an approximate (alternating) simulation relation between the interconnection of concrete components and the interconnection of discrete ones. Third, we propose a bottom-up approach for symbolic controller synthesis for the composition of $N$ control systems {and decomposable safety specifications} . Finally, we demonstrate the applicability and effectiveness of the results using two case studies (viz., DC microgrid and traffic network) and compare them with different abstraction and controller synthesis schemes in the literature.


\textbf{Related works:}
First attempts to compute compositional abstractions has been proposed for exact simulation relation~\cite{frehse2005compositional,kerber2010compositional} and simulation maps~\cite{tabuada2004compositional}, for which the construction of abstraction exists for restricted class of systems. 
In~\cite{tazaki2008bisimilar}, the first approach to provide compositionality result for approximate relationships was proposed using the notion of interconnection-compatible approximate bisimulation. Different approaches have then been proposed recently using  small-gain (or relaxed small-gain) like conditions~\cite{rungger2018compositional,pola7,swikir2018compositional} and dissipativity property~\cite{zamani2017compositional}. In~\cite{hussien2017abstracting}, a compositional construction of symbolic abstractions was proposed for the class of partially feedback linearizable systems, where the proposed approach relies on the use of a particular type of abstractions proposed in \cite{zamani2012symbolic}.The authors in~\cite{kim2018constructing} present a compositional abstraction procedure for a discrete-time control system by abstracting the interconnection map between different components. {The authors in~\cite{lal2019compositional} propose a compositional approach on the construction of bounded error over-approximations of continuous-time interconnected systems. However, those results are only available for cyclic (cascade and parallel) composition of two components. In~\cite{lal2017safety} the authors propose a compositional bounded error framework for two interconnected hybrid systems. However, they are based on a notion of language inclusion, which differs from the results presented in this paper.}

In parallel, other different approaches have been proposed for compositional controller synthesis. In~\cite{dallal2015compositional}, the authors propose a compositional approach to deal with persistency specifications using Lyapunov-like functions. The authors in~\cite{le2016distributed} use reachability analysis to provide a compositional controller synthesis for discrete-time switched systems and persistency specifications. In~\cite{meyer2017,meyer2018compositional,mallik2018compositional,smith2016interdependence,ghasemi2020compositional} different approaches were proposed for compositional controller synthesis for safety, and more general LTL specifications. All these approaches are based on assume-guarantee reasoning~\cite{8550622} and generally suffer from the underlying conservatism. {In~\cite{8115304}, the authors propose a decentralized approach to the control of networks of incrementally stable systems, enforcing specifications expressed in terms of regular languages, while showing the completeness of their decentralized approach with respect to the centralized one.}

In comparison with existing approaches in the literature, our framework presents the following advantages: 
\begin{itemize}
	\item It allows the use of different types of abstractions for individual components such as continuous~\cite{girard2009hierarchical} or discrete-time abstractions based on state-space quantization \cite{tabuada2009verification}, partition \cite{meyer2015adhs}, covering \cite{5770194}, or without any state-space discretization \cite{zamani2017towards}. {Indeed, given a composed system made of interconnected components, we can deal with the following scenarios:
		\begin{itemize}
			\item The original composed system is continuous-time and the local abstractions are continuous-time. 
			\item The original composed system is discrete-time and the local abstractions are discrete-time.
			\item The original composed system is continuous-time and the local abstractions are discrete-time. In this case, we need to have a uniform time-discretization parameter for all components, which is a common assumption in all the results in the symbolic control literature, except of~\cite{saoud2019contract} where the authors are using continuous-time assume-guarantee contracts to deal with components with different sampling periods.
	\end{itemize}}
	\item We do not need any particular structure of the components such as incremental stability or monotonicity. Moreover, we do not rely on the use of small-gain or dissipativity like conditions;
	\item The proposed approach allows us to develop a bottom-up procedure for controller synthesis which helps to reduce the computational complexity while ensuring completeness with respect to the monolithic synthesis.
\end{itemize}
A preliminary version of this work has been presented in the conference~\cite{SAOUD201813}. The current paper extends the approach in three directions: First, the approach is generalized from cascade interconnections to any composition structure. Second, while in \cite{SAOUD201813} we showed how to build a safety controller using a bottom-up approach, in this paper we show also the completeness of the proposed controller with respect to the monolithic one. Third, while in~\cite{SAOUD201813}, we only presented a simple numerical example, here the theoretical framework is applied to more realistic case studies: DC microgids and road traffic networks.

\section{Transition Systems and Behavioral Relations}
\label{sec:transition}
\textbf{Notations:} The symbols $ \mathbb{N} $, $ \mathbb{N}_0 $, and $\R_0^+ $ denote the set of positive integers, non-negative integers, and non-negative real numbers, respectively. Given sets $X$ and $Y$, we denote by $f:X\ra Y$ an ordinary map from $X$ to $Y$, whereas $f:X\rightrightarrows Y$ denotes set valued map.
For any $x_1,x_2,x_3 \in X$, the map $\mathbf{d}_X:X\times X\rightarrow\R^+_0$ is a pseudometric if the following conditions hold: (i) $x_1=x_2$ implies $\mathbf{d}_X(x_1,x_2)=0$; (ii) $\mathbf{d}_X(x_1,x_2)=\mathbf{d}_X(x_2,x_1)$; (iii) $\mathbf{d}_X(x_1,x_3)\leq \mathbf{d}_X(x_1,x_2)+ \mathbf{d}_X(x_2,x_3)$.
We identify a relation $ \mathcal{R}\subseteq A\times B $ with the map $ \mathcal{R}:A\rightrightarrows B $ defined by $ b\in\mathcal{R}(a) $ if and only if $ (a,b)\in\mathcal{R} $. 
We use notation $\|\cdot\|$ to denote the infinity norm. The null vector of dimension $N \in \mathbb{N}_0$ is denoted by $\mathbf{0}_N:=(0,\ldots,0)$.


First, we introduce the notion of \textit{transition systems} similar to the one provided in \cite{tabuada2009verification}.
\begin{definition}\label{definition6}
	A transition system is a tuple $ S=(X,X^0,U^{\Ext},U^{\Int}, \Delta, Y,H) $, where $ X $ is the set of states $($possibly infinite$)$, $ X^0\subseteq X $ is the set of initial states, $ U^{\Ext} $ is the set of external inputs $($possibly infinite$)$, $ U^{\Int} $ is the set of internal inputs $($possibly infinite$)$, $ \Delta \subseteq X\times U^{\Ext} \times U^{\Int}\times X $ is the transition relation, $ Y $ is the set of outputs, and $ H:X\rightarrow Y $ is the output map. 
\end{definition}
We denote $ x'\in \Delta(x,u^{\Ext},u^{\Int}) $ as an alternative representation for a transition $ (x,u^{\Ext},u^{\Int},x')\in\Delta  $, where state $ x' $ is called a $ (u^{\Ext},u^{\Int}) $-successor (or simply successor) of state $ x $, for some input $ (u^{\Ext},u^{\Int})\in U^{\Ext}\times U^{\Int} $. Given $x\in X$, the set of enabled (admissible) inputs for $x$ is denoted by $U_S^a(x)$ and defined as $U^a_S(x)=\{(u^{\Ext},u^{\Int})\in U^{\Ext}\times U^{\Int} \mid \Delta(x,u^{\Ext},u^{\Int})\neq \emptyset\}$. 
The transition system is said to be:
\begin{itemize}
	\item \textit{pseudometric}, if the input sets $U^i$, $i\in \{\Ext,\Int\}$ and the output set $ Y $ are equipped with pseudometrics $ \mathbf{d}_{U^{i}}: U^i\times U^i \rightarrow\R^+_0 $ and $ \mathbf{d}_Y: Y\times Y \rightarrow\R^+_0 $, respectively.
	\item \textit{finite} (or \textit{symbolic}), if sets $ X $, $ U^{\Int}$, and $ U^{\Ext}$ are finite.
	\item \textit{deterministic}, if there exists at most a $ (u^{\Ext},u^{\Int}) $-successor of $ x $, for any $ x\in X $ and $ (u^{\Ext},u^{\Int})\in U^{\Ext} \times U^{\Int} $.  
\end{itemize}
In the sequel, we consider the approximate relationship for transition systems based on the notion of approximate (alternating) simulation relation to relate abstractions to concrete systems. We start by introducing the notion of approximate simulation relation adapted from~\cite{julius2009approximate}.
\begin{definition}
	\label{Def:feedback}
	Let $S_1=(X_1,X^0_1,U^{\Ext}_1,U^{\Int}_1,\Delta_1,Y_1,H_1)$ and $S_2=(X_2,X^0_2,U^{\Ext}_2,U^{\Int}_2,\Delta_2,Y_2,H_2)$ be two transition systems such that $Y_1$ and $Y_2$ are subsets of the same pseudometric space $Y$ equipped with a pseudometric $\mathbf{d}_Y$ and $U^{\Ext}_j$ $($respectively $U^{\Int}_j$$)$, $j\in \{1,2\}$, are subsets of the same pseudometric space $U^{\Ext}$ $($respectively $U^{\Int})$ equipped with a pseudometric $\mathbf{d}_{U^{\Ext}}$ $($respectively $\mathbf{d}_{U^{\Int}})$. Let $\varepsilon,\mu\geq 0$. A relation $\mathcal R \subseteq X_1\times X_2$ is said to be an $(\varepsilon,\mu)$-approximate simulation relation from $S_1$ to $S_2$, if the following hold:
	\begin{itemize}
		\item[(i)] $\forall x_1^0\in X_1^0$, $\exists  x_2^0\in X_2^0$ such that $(x_1^0,x_2^0)\in \mathcal{R}$;
		\item[(ii)]   $\forall (x_1,x_2)\in \mathcal R$, $\mathbf{d}_Y(H_1(x_1),H_2(x_2))\leq \varepsilon$;
		\item[(iii)]  $\forall (x_1,x_2)\in \mathcal R$, $\forall (u_1^{\Ext},u_1^{\Int})\in {U^a_{S_1}}(x)$, $ \forall x_1'\in  \Delta_1(x_1,u_1^{\Ext},u_1^{\Int})$,  $\exists (u_2^{\Ext},u_2^{\Int})\in {U^a_{S_2}}(x_2) $ with\\ $\max(\mathbf{d}_{U^{\Ext}}(u_1^{\Ext},u_2^{\Ext}),\mathbf{d}_{U^{\Int}}(u_1^{\Int},u_2^{\Int}))\leq \mu$ and $\exists x_2'\in \Delta_2(x_2,u_2^{\Ext},u_2^{\Int})$ satisfying $(x_1',x_2')\in \mathcal R$.
	\end{itemize}
\end{definition}
We denote the existence of an $(\varepsilon,\mu)$-approximate simulation relation from $S_1$ to $S_2$ by $ S_1\preccurlyeq^{\varepsilon,\mu} S_2$.\\

Note that when $\mu=0$ {and $\mathbf{d}_{U^{\Int}}$ is a metric}, we recover the notion of approximate simulation relation introduced in~\cite{girard2007} and when $\mu=\infty$, it is similar to approximate simulation relation given in~\cite{tabuada2009verification}.\\
Approximate simulation relations are generally used for verification problems. If the objective is to synthesize controllers, the notion of approximate alternating simulation relation introduced in \cite{tabuada2009verification} is suitable. Interestingly, the notions of approximate simulation and approximate alternating simulation coincide in the case of deterministic transition systems.

\begin{definition}
	\label{Def:altsimu}
	Let $S_1=(X_1,X^0_1,U^{\Ext}_1,U^{\Int}_1,\Delta_1,Y_1,H_1)$ and $S_2=(X_2,X^0_2,U^{\Ext}_2,U^{\Int}_2,\Delta_2,Y_2,H_2)$ be two transition systems such that $Y_1$ and $Y_2$ are subsets of the same pseudometric space $Y$ equipped with a pseudometric $\mathbf{d}_Y$ and $U^{\Ext}_j$ $($respectively $U^{\Int}_j)$, $j\in \{1,2\}$, are subsets of the same pseudometric space $U^{\Ext}$ $($respectively $U^{\Int})$ equipped with a pseudometric $\mathbf{d}_{U^{\Ext}}$ $($respectively $\mathbf{d}_{U^{\Int}})$. Let $\varepsilon,\mu\geq 0$. A relation $\mathcal R \subseteq X_1\times X_2$ is said to be an $(\varepsilon,\mu)$-approximate alternating simulation relation from $S_2$ to $S_1$, if it satisfies:
	\begin{itemize}
		\item[(i)] $\forall x_2^0\in X_2^0$, $\exists  x_1^0\in X_1^0$ such that $(x_1^0,x_2^0)\in \mathcal{R}$;
		\item[(ii)]   $\forall (x_1,x_2)\in \mathcal R$, $\mathbf{d}_Y(H_1(x_1),H_2(x_2))\leq \varepsilon$;
		\item[(iii)]  $\forall (x_1,x_2)\in \mathcal R$, $\forall (u_2^{\Ext},u_2^{\Int})\in {U^a_{S_2}}(x_2)$,
		$\exists (u_1^{\Ext},u_1^{\Int})\in {U^a_{S_1}}(x_1) $ with $\max(\mathbf{d}_{U^{\Ext}}(u_1^{\Ext},u_2^{\Ext}),\mathbf{d}_{U^{\Int}}(u_1^{\Int},u_2^{\Int}))\leq \mu$ such that $ \forall x_1'\in  \Delta_1(x_1,u_1^{\Ext},u_1^{\Int})$, $\exists x_2'\in  \Delta_2(x_2,u_2^{\Ext},u_2^{\Int})$ satisfying $(x_1',x_2')\in \mathcal R$.
	\end{itemize}
\end{definition}
We denote the existence of an $(\varepsilon,\mu)$-approximate alternating simulation relation from $S_2$ to $S_1$ by $ S_2\preccurlyeq^{\varepsilon,\mu}_{\mathcal{AS}}S_1$.

One can readily see that when $\mu=\infty$ we recover the classical notion of approximate alternating simulation relation as introduced in~\cite{tabuada2009verification}, and when $\mu=0$ {and $\mathbf{d}_{U^{\Int}}$ is a metric}, we recover the notion of approximate simulation relation introduced in~\cite{girard2007} the approximate alternating simulation relation coincides with strong alternating simulation relation given in~\cite{borri2018design}.


\begin{remark}
	Note that the definitions of approximate $($alternating$)$ simulation relations used in this paper are slightly different from the classical ones. Unlike classical definitions, the choice of inputs in our definitions is constrained by some distance property. However, these constraints over inputs are not restrictive and the proposed notions of $(\varepsilon,\mu)$-approximate $($alternating$)$ simulation relations are compatible for different abstraction techniques presented in the literature.
\end{remark}
\section{Networks of Transition Systems and Approximate Composition}\label{sec:approx}

Given a system made of interconnected components, the computation of a direct abstraction of the whole system is computationally expensive. For this reason, we rely here on the notion of approximate composition allowing us to construct a global
abstraction of the system from the abstractions of its components. To analyze the necessity of approximate composition, let us start with the simplest interconnection structure, a cascade composition of two concrete components, where the output of the first system is an input to the second one. When going from concrete (infinite) to abstract (finite) components, the output of the first system and the input to the second system do not coincide any more, since abstractions generally involve quantization of variables. 
To mitigate this mismatch, we introduce a notion of approximate composition, by relaxing the notion of the exact composition and allowing the distance between the output to the first component and input to the second one to be bounded by some given precision.

A network of systems consists of a collection of $N \in \N$ systems $\{S_1,\ldots,S_N\}$, a set of vertices $I=\{1,\ldots,N\}$ and a binary connectivity relation $\mathcal{I}\subseteq I\times I$ where each vertex $i \in I$ is labeled with the system $S_i$. For $i\in I$, we define $\mathcal{N}(i)=\{j\in I\mid (j,i)\in \mathcal{I}\}$ as the set of neighbouring components from which the incoming edges originate. An illustration of a network of interconnected components is given in Figure~\ref{Fig:int}.

\begin{figure*}[t!]
	\centering
	\includegraphics[width=0.6\textwidth]{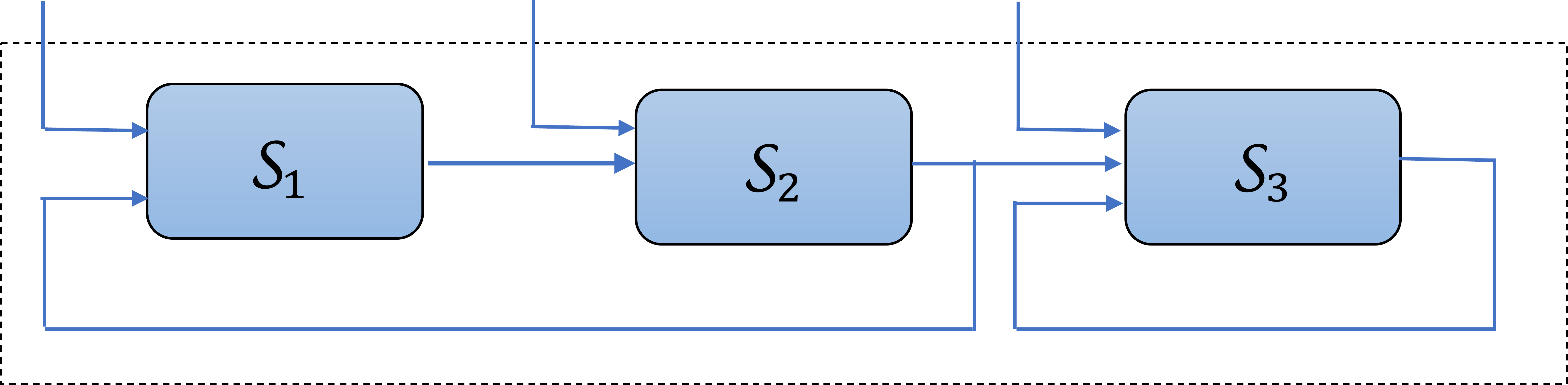}
	\caption{A network of 3 components with $I=\{1,2,3\}$ and a connectivity relation $\mathcal{I}=\{(2,1),(1,2),(2,3),(3,3)\}$.}
	\label{Fig:int}	
\end{figure*}

\begin{definition}
	\label{def:composition}
	Given a collection of transition systems $\{S_i\}_{i\in I}$, where $S_i=(X_i,X^0_i,U^{\Ext}_i,U^{\Int}_i,\Delta_i,Y_i,H_i)$ such that for all $i\in I$, $\prod_{j\in \mathcal{N}(i)}Y_j$ and $U_i^{\Int}$ are subsets of the same pseudometric space equipped with the following pseudometric:
	\begin{align}
	\text{for}~ u^{l,\Int}\hspace{-.2em}=\hspace{-.2em}(y^l_{j_1},\ldots,y^l_{j_k})\hspace{-.1em}, \hspace{-.1em} l\in\{1,2\}\hspace{-.2em},\hspace{-.2em}\text{ with } \mathcal{N}(i)\hspace{-.2em}=\hspace{-.2em}\{j_1,\ldots,\hspace{-.1em}j_k\hspace{-.2em}\},\mathbf d_{U_i^{\Int}}(u^{1,\Int},u^{2,\Int})=\max\limits_{j \in \mathcal{N}(i)}\{\mathbf d_{Y_j}(y^1_j,y^2_j)\} .
	\end{align}
	Let $M:=(\mu_1,\ldots,\mu_N)\in (\R_0^+)^N$. We say that $\{S_i\}_{i\in I}$ is compatible for $M$-approximate composition with respect to $\mathcal{I}$, if for each $i\in I$ and for each $\prod_{j\in \mathcal{N}(i)}\{y_j\} \in \prod_{j\in \mathcal{N}(i)}Y^j$, where the term $\prod_{j\in \mathcal{N}(i)}\{y_j\}$ can be formally defined as $\prod_{j\in \mathcal{N}(i)}\{y_j\}=(y_{j_1},y_{j_2},\ldots,y_{j_p})$ with $\mathcal{N}(i)=\{j_1,j_2,\ldots,j_p\}$, there exists $u^{\Int}_i\in U^{\Int}_i$ such that $\mathbf d_{U^{\Int}_i}(u^{\Int}_i,\prod_{j\in \mathcal{N}(i)}\{y_j\}) \leq \mu_i$.
	We denote $M$-approximate composed system by $\langle S_i \rangle_{i\in I}^{M,\mathcal{I}}$ and is given by the tuple $\langle S_i \rangle_{i\in I}^{M,\mathcal{I}}=(X,X^0,U^{\Ext},\Delta_M,Y,H)$, where:
	\begin{itemize}
		\item $X=\prod_{i \in I}X_i$; $X^0=\prod_{i \in I}X_i^0$;
		\item $U^{\Ext}=\prod_{i \in I}U^{\Ext}_i$; $Y=\prod_{i \in I}Y_i$;
		\item $H(x)=H(x_1,\ldots,x_N)=(H_1(x_1),\ldots,H_N(x_N))$;
		\item for $x=(x_1,\ldots,x_N)$, $x'=(x_1',\ldots,x_N')$ and $u^{\Ext}=(u^{\Ext}_1,\ldots,u^{\Ext}_N)$, $x' \in \Delta_M(x,u^{\Ext})$ if and only if for all $i\in I$, and for all $\prod_{j\in \mathcal{N}(i)}\{y_j\}= \prod_{j\in \mathcal{N}(i)}\{H_j(x_j)\}  \in \prod_{j\in \mathcal{N}(i)}Y_j$, there exists $u^{\Int}_i\in U^{\Int}_i$ with $\mathbf d_{U^{\Int}_i}(u^{\Int}_i,\prod_{j\in \mathcal{N}(i)}\{y_j\}) \leq \mu_i$, $(u_i^{\Ext},u_i^{\Int}) \in U_{S_i}^a(x_i)$ and  $x_i'\in \Delta_i(x_i,u^{\Ext}_i,u^{\Int}_i)$.
	\end{itemize}
\end{definition}

For the sake of simplicity of notations, we use $S_M$ instead of $\langle S_i \rangle_{i\in I}^{M,\mathcal{I}}$ throughout the paper. {Note that since all the internal inputs of a component are outputs of other components, the set of internal inputs of the composed system is simply the empty set. Hence, to improve readability, the composed system is defined without an internal input set. However, the same approach could be readily generalized to deal with a composed system with a given internal input set.} If $M=\mathbf{0}_N$ and $\mathbf{d}_{U_i^{\Int}}$ is a metric, for all $i\in I$, we say that collection of systems $\{S_i\}_{i\in I}$ is compatible for exact composition. Let us remark that, for the composed system, the set of enabled inputs will be defined with respect to the set $U^{\Ext}$. 
We equip the composed output space with the metric:
\begin{align}
\label{eqn:metric_output}
\text{for}\;\; y^j\in Y \text{ with } y^j=(y_1^j,\ldots,y_N^j), j\in \{1,2\}, \mathbf d_Y(y^1,y^2)=\max\limits_{i \in I}\{\mathbf d_{Y_i}(y^1_i,y^2_i)\}.
\end{align}
Similarly, we equip the composed input space with metric: 
\begin{align}\label{eqn:metric_input}
\text{for}\;\; u^j\in U^{\Ext} \text{ with } u^j=(u_1^j,\ldots,u_N^j), j\in \{1,2\},
\mathbf d_{U^{\Ext}}(u^1,u^2)=\max\limits_{i \in I}\{\mathbf d_{U_i^{\Ext}}(u^1_i,u^2_i)\}.
\end{align}

Let us remark that the parameter of the composition, i.e. $M$, affects the conservativeness of the composed transition system. The following result shows that by increasing the parameter of the composition, the composed transition system allows for more nondeterminism in transitions and hence becomes more conservative. This result is straightforward and is stated without any proof.

\begin{claim}
	\label{claim:1}
	Consider a collection of systems $\{S_i\}_{i\in I}$ and $M=(\mu_1,\ldots,\mu_N)\in (\R^+_0)^N$. If $\{S_i\}_{i\in I}$ is compatible for $M$-approximate composition with respect to $\mathcal{I}$, then it is also compatible for $\overline{M}$-approximate composition with respect to $\mathcal{I}$, for any $\ol M=(\ol\mu_1,\ldots,\ol\mu_N)\in (\R^+_0)^N$ such that $\overline{M}\geq M$ $($i.e., $\ol\mu_i\geq\mu_i$, $i\in I)$. Moreover, the relation $\mathcal{R}=\{(x,x')\in  X\times X \mid x=x'\}$ is a $(0,0)$-approximate simulation relation from $S_M$ to $S_{\overline{M}}$, where $S_M=\langle S^i \rangle_{i\in I}^{M,\mathcal{I}}$ and $S_{\overline{M}}=\langle S_i \rangle_{i\in I}^{\overline{M},\mathcal{I}}$.
\end{claim}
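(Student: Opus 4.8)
The plan is to prove the two assertions directly from Definition~\ref{def:composition} and Definition~\ref{Def:feedback}, the whole argument resting on a single monotonicity observation: since $\overline{\mu}_i \geq \mu_i$ for every $i \in I$, every constraint of the form $\mathbf{d}_{U_i^{\Int}}(\cdot,\cdot) \leq \mu_i$ appearing in the $M$-composition is weakened in the $\overline{M}$-composition, so any witness that is admissible for $S_M$ remains admissible for $S_{\overline{M}}$.

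For the compatibility part, I would fix $i \in I$ and an arbitrary tuple $\prod_{j\in\mathcal{N}(i)}\{y_j\} \in \prod_{j\in\mathcal{N}(i)} Y_j$; the assumed $M$-compatibility provides $u_i^{\Int} \in U_i^{\Int}$ with $\mathbf{d}_{U_i^{\Int}}(u_i^{\Int},\prod_{j\in\mathcal{N}(i)}\{y_j\}) \leq \mu_i \leq \overline{\mu}_i$, which is exactly the condition required for $\overline{M}$-compatibility. For the simulation relation, I would first record that $S_M$ and $S_{\overline{M}}$ have identical state sets, initial-state sets, external-input sets, output sets, and output maps, and differ only in their transition relations $\Delta_M$ and $\Delta_{\overline{M}}$. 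Conditions~(i) and~(ii) of Definition~\ref{Def:feedback} (with $\varepsilon = 0$) are then immediate: given $x^0$ in the common initial-state set the pair $(x^0,x^0)$ lies in $\mathcal{R}$, and $\mathbf{d}_Y(H(x),H(x)) = 0$ because the composed-output metric \eqref{eqn:metric_output} is a (pseudo)metric, inherited from the $\mathbf{d}_{Y_i}$ via the $\max$.

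The one step that needs spelling out is condition~(iii), and it reduces to the inclusion $\Delta_M(x,u^{\Ext}) \subseteq \Delta_{\overline{M}}(x,u^{\Ext})$ for all $x$ and $u^{\Ext}$: if $x' \in \Delta_M(x,u^{\Ext})$, then for each $i \in I$ there is $u_i^{\Int}$ with $\mathbf{d}_{U_i^{\Int}}(u_i^{\Int},\prod_{j\in\mathcal{N}(i)}\{H_j(x_j)\}) \leq \mu_i$, $(u_i^{\Ext},u_i^{\Int}) \in U_{S_i}^a(x_i)$, and $x_i' \in \Delta_i(x_i,u_i^{\Ext},u_i^{\Int})$; since $\mu_i \leq \overline{\mu}_i$ these same witnesses place $x'$ in $\Delta_{\overline{M}}(x,u^{\Ext})$. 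In particular $U^a_{S_M}(x) \subseteq U^a_{S_{\overline{M}}}(x)$, so for $(x,x) \in \mathcal{R}$, $u^{\Ext} \in U^a_{S_M}(x)$ and $x' \in \Delta_M(x,u^{\Ext})$ I would simply take the same external input (input distance $0$, so $\mu = 0$ suffices) and the same successor $x' \in \Delta_{\overline{M}}(x,u^{\Ext})$, with $(x',x') \in \mathcal{R}$. I do not expect a genuine obstacle here — this is why the statement is given without proof; the only thing to be careful about is to keep $\Delta_M$ and $\Delta_{\overline{M}}$ distinct throughout and to genuinely quantify over the internal-input witnesses $u_i^{\Int}$ in the definition of $\Delta_M$ rather than treating them as fixed data of the composed system.
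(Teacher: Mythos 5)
Your proof is correct: the paper states this claim without proof (``straightforward''), and your argument supplies exactly the intended one, with the key step being the transition-relation inclusion $\Delta_M(x,u^{\Ext}) \subseteq \Delta_{\overline{M}}(x,u^{\Ext})$ (and hence $U^a_{S_M}(x)\subseteq U^a_{S_{\overline{M}}}(x)$), which follows from weakening each constraint $\mathbf{d}_{U_i^{\Int}}(\cdot,\cdot)\leq\mu_i$ to $\leq\overline{\mu}_i$. Your care in distinguishing $\Delta_M$ from $\Delta_{\overline{M}}$ and in quantifying over the internal-input witnesses is exactly what makes the identity relation a $(0,0)$-approximate simulation relation from $S_M$ to $S_{\overline{M}}$.
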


\section{Compositionality Results}
\label{sec:compo}
In this section, we provide relations between interconnected systems based on the relations between their components. We first present the compositionality result for approximate alternating simulation relations.

\begin{theorem}
	\label{thm:altsimulation}
	Let $\{S_i\}_{i\in I}$ and $\{\hat{S}_i\}_{i\in I}$ be two collection of transition systems with $S_i=(X_i,X^0_i,U^{\Ext}_i,U^{\Int}_i,\Delta_i$, $Y_i,H_i)$ and $\hat{S}_i=(\hat{X}_i,\hat{X}^0_i,\hat{U}^{\Ext}_i,\hat{U}^{\Int}_i,\hat{\Delta}_i,\hat{Y}_i,\hat{H}_i)$. Consider non-negative constants $\varepsilon_i,\mu_i,\delta_i \geq0$, $i\in I$, with $\varepsilon=\max\limits_{i\in I}\varepsilon_i$ and $\mu=\max\limits_{i\in I}\mu_i$. Let the following conditions hold:
	\begin{itemize}
		\item for all $i\in I$, $\hat{S}_i\preccurlyeq^{\varepsilon_i,\mu_i}_{\mathcal{AS}}S_i$ with a relation $\mathcal{R}_i$;
		\item $\{S_i\}_{i\in I}$ are compatible for $M$-approximate composition with respect to $\mathcal{I}$, with $M=(\delta_1,\ldots,\delta_N)$;
		\item $\{\hat{S}_i\}_{i\in I}$ are compatible for $\hat{M}$-approximate composition with respect to $\mathcal{I}$, with $\hat{M}\hspace{-.2em}=\hspace{-.2em}(\mu_1\hspace{-.1em}+\hspace{-.1em}\delta_1\hspace{-.1em}+\hspace{-.1em}\varepsilon,\ldots,\mu_N\hspace{-.1em}+\hspace{-.1em}\delta_N\hspace{-.1em}+\hspace{-.1em}\varepsilon)$.
	\end{itemize}
	Then the relation $\mathcal{R} \subseteq X\times  \hat{X}$ defined by
	\begin{align*}
	\mathcal{R}\hspace{-.2em}=\hspace{-.2em}\{(x_1,\ldots,x_N,\hat{x}_1,\ldots,\hat{x}_N)\hspace{-.2em}\in\hspace{-.2em}  X\hspace{-.2em}\times\hspace{-.2em} \hat{X} \mid \forall i \hspace{-.2em}\in\hspace{-.2em} I, (x_i,\hat{x}_i)\hspace{-.2em}\in\hspace{-.2em} \mathcal{R}_i\}
	\end{align*}
	is an $(\varepsilon, \mu)$-approximate alternating simulation relation from $\hat{S}_{\hat{M}}$ to $S_M$ $($\ie, $\hat{S}_{\hat{M}}\preccurlyeq^{\varepsilon,\mu}_{\mathcal{AS}}S_M)$, where $S_M=\langle S_i \rangle_{i\in I}^{M,\mathcal{I}}$ and $\hat{S}_{\hat{M}}=\langle \hat{S}_i \rangle_{i\in I}^{\hat{M},\mathcal{I}}$.
\end{theorem}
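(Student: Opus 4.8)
The plan is to check the three clauses of Definition~\ref{Def:altsimu} for $\mathcal R$ directly, exploiting that $\mathcal R$, $X$, $X^0$, $Y$, $U^{\Ext}$ and the transition relations $\Delta_M$, $\hat{\Delta}_{\hat{M}}$ are all built componentwise. Clause (i) is immediate: given $\hat{x}^0=(\hat{x}_1^0,\dots,\hat{x}_N^0)\in\hat{X}^0$, clause (i) of each $\hat{S}_i\preccurlyeq^{\varepsilon_i,\mu_i}_{\mathcal{AS}}S_i$ supplies $x_i^0\in X_i^0$ with $(x_i^0,\hat{x}_i^0)\in\mathcal R_i$, and then $(x_1^0,\dots,x_N^0,\hat{x}_1^0,\dots,\hat{x}_N^0)\in\mathcal R$. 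Clause (ii) follows from \eqref{eqn:metric_output}: for $(x,\hat x)\in\mathcal R$, $\mathbf d_Y(H(x),\hat H(\hat x))=\max_{i\in I}\mathbf d_{Y_i}(H_i(x_i),\hat H_i(\hat x_i))\le\max_{i\in I}\varepsilon_i=\varepsilon$, using clause (ii) of each $\mathcal R_i$.

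Clause (iii) is the heart of the argument. Fix $(x,\hat x)\in\mathcal R$ and $\hat u^{\Ext}=(\hat u_1^{\Ext},\dots,\hat u_N^{\Ext})\in U^a_{\hat{S}_{\hat{M}}}(\hat x)$, and abbreviate $v_i:=\prod_{j\in\mathcal N(i)}\{H_j(x_j)\}$ and $\hat v_i:=\prod_{j\in\mathcal N(i)}\{\hat H_j(\hat x_j)\}$. Since $\hat u^{\Ext}$ is enabled at $\hat x$ in $\hat{S}_{\hat{M}}$, unfolding the definition of $\hat{\Delta}_{\hat{M}}$ gives, for each $i\in I$, an abstract internal input $\hat u_i^{\Int}\in\hat U_i^{\Int}$ with $\mathbf d_{\hat U_i^{\Int}}(\hat u_i^{\Int},\hat v_i)\le\mu_i+\delta_i+\varepsilon$ and $(\hat u_i^{\Ext},\hat u_i^{\Int})\in U^a_{\hat S_i}(\hat x_i)$. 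Feeding $(\hat u_i^{\Ext},\hat u_i^{\Int})$ into clause (iii) of $\hat{S}_i\preccurlyeq^{\varepsilon_i,\mu_i}_{\mathcal{AS}}S_i$ at $(x_i,\hat x_i)\in\mathcal R_i$ yields $(u_i^{\Ext},u_i^{\Int})\in U^a_{S_i}(x_i)$ with $\mathbf d_{U^{\Ext}}(u_i^{\Ext},\hat u_i^{\Ext})\le\mu_i$, $\mathbf d_{U^{\Int}}(u_i^{\Int},\hat u_i^{\Int})\le\mu_i$, and such that each $x_i'\in\Delta_i(x_i,u_i^{\Ext},u_i^{\Int})$ has a matching $\hat x_i'\in\hat\Delta_i(\hat x_i,\hat u_i^{\Ext},\hat u_i^{\Int})$ with $(x_i',\hat x_i')\in\mathcal R_i$. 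I take $u^{\Ext}:=(u_1^{\Ext},\dots,u_N^{\Ext})$; the required $\mu$-closeness is immediate from \eqref{eqn:metric_input}, since $\mathbf d_{U^{\Ext}}(u^{\Ext},\hat u^{\Ext})=\max_{i\in I}\mathbf d_{U^{\Ext}}(u_i^{\Ext},\hat u_i^{\Ext})\le\max_{i\in I}\mu_i=\mu$.

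What remains—and what I expect to be the main obstacle—is to prove that $u^{\Ext}\in U^a_{S_M}(x)$ and that every $x'\in\Delta_M(x,u^{\Ext})$ is matched by some $\hat x'\in\hat{\Delta}_{\hat{M}}(\hat x,\hat u^{\Ext})$ with $(x',\hat x')\in\mathcal R$. Both are handled component by component, and both hinge on the triangle inequality that dictates the choice $\hat M=(\mu_1+\delta_1+\varepsilon,\dots,\mu_N+\delta_N+\varepsilon)$: a concrete internal input that $S_M$ may legitimately use at component $i$ lies within $\delta_i$ of $v_i$ (by the definition of $\Delta_M$, whose parameter is $M=(\delta_i)_i$); the concrete neighbour outputs $v_i$ are within $\varepsilon$ of the abstract ones $\hat v_i$, because each $j\in\mathcal N(i)$ satisfies $\mathbf d_{Y_j}(H_j(x_j),\hat H_j(\hat x_j))\le\varepsilon_j\le\varepsilon$ by clause (ii) of $\mathcal R_j$; and the concrete and abstract internal inputs produced above differ by at most $\mu_i$ via $\mathcal R_i$. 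Summing the three contributions lands exactly at $\mu_i+\delta_i+\varepsilon$, i.e.\ within the $i$-th tolerance of $\hat{S}_{\hat{M}}$, so the corresponding abstract transition is admissible there; the last clause of each component alternating simulation then produces the matching abstract successor per component, and these reassemble into the desired $\hat x'$ because both $\mathcal R$ and $\hat{\Delta}_{\hat{M}}$ are defined componentwise, yielding $\hat S_{\hat M}\preccurlyeq^{\varepsilon,\mu}_{\mathcal{AS}}S_M$. The delicate point is reconciling the internal input actually used by the nondeterministic composed relation $\Delta_M$ with the one handed over by the component simulations, and checking that admissibility in $S_M$ survives this reconciliation simultaneously at every $i\in I$; clauses (i)–(ii) and the input bound, by contrast, are pure bookkeeping.
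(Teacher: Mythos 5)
Your proof follows the paper's argument essentially step for step: the same componentwise verification of the three clauses of Definition~\ref{Def:altsimu}, the same choice of $u^{\Ext}=(u_1^{\Ext},\ldots,u_N^{\Ext})$ supplied by the component alternating simulation relations, and the same three-term triangle inequality summing to $\mu_i+\delta_i+\varepsilon$ to certify that the abstract internal inputs are admissible for the $\hat M$-approximate composition. The ``delicate point'' you flag at the end---reconciling the existentially quantified internal input in the definition of $\Delta_M$ with the one handed over by $\mathcal{R}_i$, so that the successor matching and the bound $\mathbf{d}_{U_i^{\Int}}(u_i^{\Int},\prod_{j\in\mathcal{N}(i)}\{y_j\})\le\delta_i$ apply to the same input---is passed over in exactly the same way by the paper's own proof, which uses that bound without deriving it, so your write-up matches the original in both approach and level of rigor (and is more candid about where the weak spot lies).
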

\begin{proof}
	The first condition of Definition \ref{Def:altsimu} is directly satisfied.
	Let $(x,\hat{x})\in\mathcal{R}$ with $x=(x_1,\ldots,x_N)$ and $\hat{x}=(\hat{x}_1,\ldots,\hat{x}_N)$. We have 
	$\mathbf{d}_{Y}(H(q),\hat{H}(\hat{q}))= \mathbf{d}_{Y}((H_1(q_1),\ldots,H_N(q_N)),(\hat{H}_1(\hat{q}_1),\ldots,\hat{H}_N(\hat{q}_N)))=\max\limits_{i\in I}  \mathbf d_{Y_i}(H_i(q_i),\hat{H}(\hat{q}_i)) \leq \max\limits_{i\in I}\varepsilon_i= \varepsilon$,
	where the first equality comes from the definition of the output map for approximate composition, the second equality follows from (\ref{eqn:metric_output}), and the inequality comes from condition (ii) of Definition \ref{Def:altsimu}.
	
	Consider $(x,\hat{x})\in\mathcal{R}$ with $x=(x_1,\ldots,x_N)$ and $\hat{x}=(\hat{x}_1,\ldots,\hat{x}_N)$ and any $\hat{u}^{\Ext} \in \hat{U}^a_{\hat{S}_{\hat{M}}}(x)$ with $\hat{u}^{\Ext}=(\hat{u}_1^{\Ext},\ldots,\hat{u}_N^{\Ext})$. Let us prove the existence of $u^{\Ext} \in U^a_{S_M}(x)$ with $\mathbf{d}_{U^{\Ext}}(u^{\Ext},\hat{u}^{\Ext}) \leq \mu$ and such that for any $x'\in \Delta_M(x,u^{\Ext})$, there exists $\hat{x}' \in \hat{\Delta}_{\hat M}(\hat{x},\hat{u})$ satisfying $(x',\hat{x}')\in \mathcal{R}$.
	From the definition of relation $\mathcal R$, we have for all $i\in I$, $(x_i,\hat{x}_i)\in \mathcal{R}_i$, then from the third condition of Definition \ref{Def:altsimu}, we have for all $(\hat{u}_i^{\Ext},\hat{u}_i^{\Int}) \in \hat{U}^a_{\hat{S}_i}(\hat{x}_i)$, the existence of $(u_i^{\Ext},u_i^{\Int})\in U^a_{S_i}(x_i)$ with $\mathbf{d}_{U_i^{\Ext}}(u_i^{\Ext},\hat{u}_i^{\Ext})\leq\mu_i$ and $\mathbf{d}_{U_i^{\Int}}(u_i^{\Int},\hat{u}_i^{\Int})\leq\mu_i$ such that for any $x_i'\in \Delta_i(x_i,u_i^{\Ext},u_i^{\Int})$ there exists $\hat{x}_i'\in \hat{\Delta}_i(\hat{x}_i,\hat{u}_i^{\Ext},\hat{u}_i^{\Int})$ such that $(x_i',\hat{x}_i')\in \mathcal{R}_i$.
	
	Let us show that the input $\hat{u}^{\Int}=(\hat{u}^{\Int}_1,\ldots,\hat{u}^{\Int}_N)$ satisfies the requirement of the $\hat{M}$-approximate composition of the components $\{\hat{S}_i\}_{i\in I}$. The condition $\mathbf{d}_{U_i^{\Int}}(u_i^{\Int},\hat{u}_i^{\Int})\leq \mu_i$ implies that
	\begin{equation*}
	\begin{split}
	\mathbf{d}_{U_i^{\Int}}\hspace{-.1em}(\hat{u}_i^{\Int},\hspace{-.6em}\prod_{j\in \mathcal{N}(i)}\hspace{-.6em}\{\hat{y}_j\}) &\hspace{-.2em}\leq\hspace{-.2em} \mathbf{d}_{U_i^{\Int}}(\hat{u}_i^{\Int},u_i^{\Int})\hspace{-.2em}+\hspace{-.2em}\mathbf{d}_{U_i^{\Int}}(u_i^{\Int},\hspace{-.6em}\prod_{j\in \mathcal{N}(i)}\hspace{-.6em}\{\hat{y}_j\})\\& \hspace{-.2em}\leq\hspace{-.2em} \mathbf{d}_{U_i^{\Int}}\hspace{-.1em}(\hat{u}_i^{\Int}\hspace{-.2em},\hspace{-.1em}u_i^{\Int}\hspace{-.1em})\hspace{-.2em}+\hspace{-.2em}\mathbf{d}_{U_i^{\Int}}\hspace{-.1em}(\hspace{-.1em}u_i^{\Int}\hspace{-.2em},\hspace{-.7em}\prod_{j\in \mathcal{N}(i)}\hspace{-.7em}\{y_j\}\hspace{-.2em})\hspace{-.2em}+\hspace{-.2em}\mathbf{d}_{U_i^{\Int}}\hspace{-.1em}(\hspace{-.6em}\prod_{j\in \mathcal{N}(i)}\hspace{-.7em}\{y_j\}\hspace{-.1em},\hspace{-.6em}\prod_{j\in \mathcal{N}(i)}\hspace{-.7em}\{\hat{y}_j\}\hspace{-.2em})\\& \hspace{-.2em}\leq \mu_i + \delta_i+ \max\limits_{j\in \mathcal{N}(i)} \varepsilon_j \leq \mu_i + \delta_i+ \max\limits_{j \in I} \varepsilon_j =\mu_i+\delta_i+\varepsilon.
	\end{split}
	\end{equation*}
	Hence, from (iii) the $\hat{M}$- approximate composition with respect to $\mathcal{I}$ of $\{\hat{S}_i\}_{i\in I}$ is well defined in the sense of Definition \ref{def:composition}. Thus, condition (iii) in Definition \ref{Def:altsimu} holds with $u^{\Ext}=(u_1^{\Ext},\ldots,u_N^{\Ext})$ satisfying $\mathbf d_{U^{\Ext}}(u^{\Ext},\hat{u}^{\Ext})=\max\limits_ {i \in I}\{\mathbf d_{U_i^{\Ext}}(u^{\Ext}_i,\hat{u}^{\Ext}_i)\}  = \max\limits_{i \in I}\{\mu_i\} =\mu$, and one obtains $\hat{S}_{\hat{M}}\preccurlyeq^{\varepsilon,\mu}_{\mathcal{AS}}S_M$. 
\end{proof}

We then have the following compositionality result for approximate simulation relations.

\begin{theorem}
	\label{thm:simulation}
	Let $\{S_i\}_{i\in I}$ and $\{\hat{S}_i\}_{i\in I}$ be two collection of transition systems with $S_i=(X_i,X^0_i,U^{\Ext}_i,U^{\Int}_i,\Delta_i$, $Y_i,H_i)$ and $\hat{S}_i=(\hat{X}_i,\hat{X}^0_i,\hat{U}^{\Ext}_i,\hat{U}^{\Int}_i,\hat{\Delta}_i,\hat{Y}_i,\hat{H}_i)$. Consider non-negative constants $\varepsilon_i,\mu_i,\delta_i \geq0$, $i\in I$, with $\varepsilon=\max\limits_{i\in I}\varepsilon_i$ and $\mu=\max\limits_{i\in I}\mu_i$. Let the following conditions hold:
	\begin{itemize}
		\item for all $i\in I$, $S_i\preccurlyeq^{\varepsilon_i,\mu_i}\hat S_i$ with a relation $\mathcal{R}_i$;
		\item $\{S_i\}_{i\in I}$ are compatible for $M$-approximate composition with respect to $\mathcal{I}$, with $M=(\delta_1,\ldots,\delta_N)$;
		\item $\{\hat{S}_i\}_{i\in I}$ are compatible for $\hat{M}$-approximate composition with respect to $\mathcal{I}$, with $\hat{M}\hspace{-.2em}=\hspace{-.2em}(\mu_1\hspace{-.1em}+\hspace{-.1em}\delta_1\hspace{-.1em}+\hspace{-.1em}\varepsilon,\ldots,\mu_N\hspace{-.1em}+\hspace{-.1em}\delta_N\hspace{-.1em}+\hspace{-.1em}\varepsilon)$.
	\end{itemize}
	Then the relation $\mathcal{R} \subseteq X\times  \hat{X}$ defined by
	\begin{align*}
	\mathcal{R}\hspace{-.2em}=\hspace{-.2em}\{(x_1,\ldots,x_N,\hat{x}_1,\ldots,\hat{x}_N)\hspace{-.2em}\in\hspace{-.2em}  X\hspace{-.2em}\times\hspace{-.2em} \hat{X} \mid \hspace{-.2em}\forall i \hspace{-.2em}\in\hspace{-.2em} I, (x_i,\hat{x}_i)\hspace{-.2em}\in \hspace{-.2em}\mathcal{R}_i\}
	\end{align*}
	is an $(\varepsilon, \mu)$-approximate simulation relation from $S_M$ to $\hat{S}_{\hat{M}}$ $($\ie, $S_M\preccurlyeq^{\varepsilon,\mu}\hat S_{\hat M})$, where $S_M=\langle S_i \rangle_{i\in I}^{M,\mathcal{I}}$ and $\hat{S}_{\hat{M}}=\langle \hat{S}_i \rangle_{i\in I}^{\hat{M},\mathcal{I}}$.
\end{theorem}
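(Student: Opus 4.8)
The plan is to follow the template of the proof of Theorem~\ref{thm:altsimulation}, but with the two composed systems swapped so as to verify the three conditions of Definition~\ref{Def:feedback} (approximate simulation) rather than those of Definition~\ref{Def:altsimu}. Condition~(i) is immediate, since $X^0=\prod_{i\in I}X_i^0$, $\hat X^0=\prod_{i\in I}\hat X_i^0$, and each $\mathcal R_i$ satisfies condition~(i) of Definition~\ref{Def:feedback}. For condition~(ii), given $(x,\hat x)\in\mathcal R$ with $x=(x_1,\dots,x_N)$ and $\hat x=(\hat x_1,\dots,\hat x_N)$, the definition of the output map of the composition together with the metric (\ref{eqn:metric_output}) on the composed output space gives $\mathbf d_Y(H(x),\hat H(\hat x))=\max_{i\in I}\mathbf d_{Y_i}(H_i(x_i),\hat H_i(\hat x_i))\le\max_{i\in I}\varepsilon_i=\varepsilon$, where each term is bounded using condition~(ii) of the relation $\mathcal R_i$.

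The core of the argument is condition~(iii). Fix $(x,\hat x)\in\mathcal R$, an external input $u^{\Ext}=(u_1^{\Ext},\dots,u_N^{\Ext})\in U^a_{S_M}(x)$, and a successor $x'=(x_1',\dots,x_N')\in\Delta_M(x,u^{\Ext})$. The point to note is that, since in the simulation direction the concrete successor is given first, the concrete internal inputs are already determined by the definition of $\Delta_M$ (Definition~\ref{def:composition}): for every $i\in I$ there is $u_i^{\Int}\in U_i^{\Int}$ with $\mathbf d_{U_i^{\Int}}(u_i^{\Int},\prod_{j\in\mathcal N(i)}\{H_j(x_j)\})\le\delta_i$, $(u_i^{\Ext},u_i^{\Int})\in U^a_{S_i}(x_i)$, and $x_i'\in\Delta_i(x_i,u_i^{\Ext},u_i^{\Int})$. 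Applying condition~(iii) of Definition~\ref{Def:feedback} to $S_i\preccurlyeq^{\varepsilon_i,\mu_i}\hat S_i$ at $(x_i,\hat x_i)\in\mathcal R_i$ then yields, for each $i$, a pair $(\hat u_i^{\Ext},\hat u_i^{\Int})\in U^a_{\hat S_i}(\hat x_i)$ with $\max(\mathbf d_{U^{\Ext}}(u_i^{\Ext},\hat u_i^{\Ext}),\mathbf d_{U^{\Int}}(u_i^{\Int},\hat u_i^{\Int}))\le\mu_i$ and a successor $\hat x_i'\in\hat\Delta_i(\hat x_i,\hat u_i^{\Ext},\hat u_i^{\Int})$ with $(x_i',\hat x_i')\in\mathcal R_i$. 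I would then propose $\hat u^{\Ext}=(\hat u_1^{\Ext},\dots,\hat u_N^{\Ext})$ and $\hat x'=(\hat x_1',\dots,\hat x_N')$ as the witnesses.

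What remains to be checked — and this is the only place the hypothesis on $\hat M$ is used — is that $(\hat u_i^{\Int})_{i\in I}$ realizes a transition of the composed abstraction, i.e. that for each $i$, $\mathbf d_{U_i^{\Int}}(\hat u_i^{\Int},\prod_{j\in\mathcal N(i)}\{\hat H_j(\hat x_j)\})\le\mu_i+\delta_i+\varepsilon$. This follows, exactly as in the proof of Theorem~\ref{thm:altsimulation}, from a three-term triangle inequality: $\mathbf d_{U_i^{\Int}}(\hat u_i^{\Int},u_i^{\Int})\le\mu_i$ from the component relation, $\mathbf d_{U_i^{\Int}}(u_i^{\Int},\prod_{j\in\mathcal N(i)}\{H_j(x_j)\})\le\delta_i$ from $x'\in\Delta_M(x,u^{\Ext})$, and $\mathbf d_{U_i^{\Int}}(\prod_{j\in\mathcal N(i)}\{H_j(x_j)\},\prod_{j\in\mathcal N(i)}\{\hat H_j(\hat x_j)\})=\max_{j\in\mathcal N(i)}\mathbf d_{Y_j}(H_j(x_j),\hat H_j(\hat x_j))\le\max_{j\in\mathcal N(i)}\varepsilon_j\le\varepsilon$, using the definition of $\mathbf d_{U_i^{\Int}}$ and condition~(ii) of the relations $\mathcal R_j$. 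Hence $\hat x'\in\hat\Delta_{\hat M}(\hat x,\hat u^{\Ext})$ (so in particular $\hat u^{\Ext}\in U^a_{\hat S_{\hat M}}(\hat x)$), $(x',\hat x')\in\mathcal R$ since $(x_i',\hat x_i')\in\mathcal R_i$ for all $i$, and $\mathbf d_{U^{\Ext}}(u^{\Ext},\hat u^{\Ext})=\max_{i\in I}\mathbf d_{U_i^{\Ext}}(u_i^{\Ext},\hat u_i^{\Ext})\le\max_{i\in I}\mu_i=\mu$ by (\ref{eqn:metric_input}); this establishes condition~(iii) and therefore $S_M\preccurlyeq^{\varepsilon,\mu}\hat S_{\hat M}$. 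I expect no genuine obstacle: the argument is the ``forward simulation'' mirror image of Theorem~\ref{thm:altsimulation}, and the only thing to be careful about is the quantifier bookkeeping — in the simulation direction the concrete internal inputs are handed to us by $\Delta_M$ before the component relations are invoked, whereas in the alternating case they were chosen afterwards.
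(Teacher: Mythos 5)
Your proposal is correct and follows essentially the same route as the paper's proof: verify the three conditions of Definition~\ref{Def:feedback}, obtain the concrete internal inputs from the definition of $\Delta_M$, invoke the component relations $\mathcal R_i$, and close the argument with the same three-term triangle inequality bounding $\mathbf d_{U_i^{\Int}}(\hat u_i^{\Int},\prod_{j\in\mathcal N(i)}\{\hat H_j(\hat x_j)\})$ by $\mu_i+\delta_i+\varepsilon$. Your remark on the quantifier bookkeeping (the concrete internal inputs being supplied by $\Delta_M$ before the component relations are applied, unlike in the alternating case) is exactly the point the paper's proof relies on implicitly.
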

 \begin{proof}
	The first condition of Definition \ref{Def:feedback} is directly satisfied.
	Let $(x,\hat{x})\in\mathcal{R}$ with $x=(x_1,\ldots,x_N)$ and $\hat{x}=(\hat{x}_1,\ldots,\hat{x}_N)$. We have 
	$\mathbf{d}_{Y}(H(x),\hat{H}(\hat{x}))= \mathbf{d}_{Y}((H_1(x_1),\ldots,H_N(x_N)),(\hat{H}_1(\hat{x}_1),\ldots,\hat{H}_N(\hat{x}_N)))=\max\limits_{i\in I}  \mathbf d_{Y_i}(H_i(x_i),\hat{H}(\hat{x}_i)) \leq \max\limits_{i\in I}\varepsilon_i= \varepsilon,$
	where the first equality comes from the definition of the output map for approximate composition, the second equality follows from (\ref{eqn:metric_output}), and the inequality comes from the second condition of Definition \ref{Def:feedback}.

	Consider $(x,\hat{x})\in\mathcal{R}$ with $x=(x_1,\ldots,x_N)$ and $\hat{x}=(\hat{x}_1,\ldots,\hat{x}_N)$ and any $u^{\Ext} \in U_{S_M}^a(x)$ with $u^{\Ext}=(u_1^{\Ext},\ldots,u_N^{\Ext})$. Consider the transition $x'\in \Delta_M(x,u^{\Ext})$. This implies that for all $i\in I$, and for all $\prod_{j\in \mathcal{N}(i)}\{y_j\}= \prod_{j\in \mathcal{N}(i)}\{H_j(x_j)\}  \in \prod_{j\in \mathcal{N}(i)}Y_j$, there exists $u^{\Int}_i\in U^{\Int}_i$ with $\mathbf d_{U^{\Int}_i}(u^{\Int}_i,\prod_{j\in \mathcal{N}(i)}\{y_j\}) \leq \delta_i$, $(u_i^{\Ext},u_i^{\Int}) \in U^a_{S_i}(x_i)$ and  $x_i'\in \Delta_i(x_i,u^{\Ext}_i,u^{\Int}_i)$. Let us prove the existence of an input $\hat{u}^{\Ext} \in \hat{U}^a_{\hat{S}_{\hat{M}}}(\hat{x})$ such that $\mathbf{d}_{U^{\Ext}}(u^{\Ext},\hat{u}^{\Ext}) \leq \mu$ and a transition $\hat{x}'\in \hat{\Delta}_{\hat M}(\hat{x},\hat{u}^{\Ext})$ such that $(x',\hat{x}') \in \mathcal{R}$.\\
	From the definition of the relation $\mathcal R$, we have for all $i\in I$, $(x_i,\hat{x}_i)\in \mathcal{R}_i$, $(u_i^{\Ext},u_i^{\Int}) \in U^a_{S_i}(x_i)$ and $x_i'\in \Delta_i(x_i,u_i^{\Ext},u_i^{\Int})$, then from the third condition of the Definition \ref{Def:feedback}, there exists $(\hat{u}_i^{\Ext},\hat{u}_i^{\Int})\in \hat{U}^a_{\hat{S}_i}(\hat{x}_i)$ with $\mathbf{d}_{U_i^{\Ext}}(u_i^{\Ext},\hat{u}_i^{\Ext})\leq\mu_i$ and $\mathbf{d}_{U_i^{\Int}}(u_i^{\Int},\hat{u}_i^{\Int})\leq\mu_i$  and there exists $\hat{x}_i'\in \hat{\Delta}_i(\hat{x}_i,\hat{u}_i^{\Ext},\hat{u}_i^{\Int})$ such that $(x_i',\hat{x}_i')\in \mathcal{R}_i$. 
	Let us show that the input $\hat{u}^{\Int}=(\hat{u}^{\Int}_1,\ldots,\hat{u}^{\Int}_N)$ satisfies the requirement of the $\hat{M}$-approximate composition of the components $\{\hat{S}_i\}_{i\in I}$. The condition $\mathbf{d}_{U_i^{\Int}}(u_i^{\Int},\hat{u}_i^{\Int})\leq \mu_i$ implies that
	\begin{equation*}
	\begin{split}
	&\mathbf{d}_{U_i^{\Int}}\hspace{-.1em}(\hat{u}_i^{\Int},\hspace{-.6em}\prod_{j\in \mathcal{N}(i)}\hspace{-.6em}\{\hat{y}_j\}) \hspace{-.2em}\leq\hspace{-.2em} \mathbf{d}_{U_i^{\Int}}(\hat{u}_i^{\Int},u_i^{\Int})\hspace{-.2em}+\hspace{-.2em}\mathbf{d}_{U_i^{\Int}}(u_i^{\Int},\hspace{-.6em}\prod_{j\in \mathcal{N}(i)}\hspace{-.6em}\{\hat{y}_j\})\\& \hspace{-.2em}\leq\hspace{-.2em} \mathbf{d}_{U_i^{\Int}}\hspace{-.1em}(\hat{u}_i^{\Int}\hspace{-.2em},\hspace{-.1em}u_i^{\Int}\hspace{-.1em})\hspace{-.2em}+\hspace{-.2em}\mathbf{d}_{U_i^{\Int}}\hspace{-.1em}(\hspace{-.1em}u_i^{\Int}\hspace{-.2em},\hspace{-.7em}\prod_{j\in \mathcal{N}(i)}\hspace{-.7em}\{y_j\}\hspace{-.2em})\hspace{-.2em}+\hspace{-.2em}\mathbf{d}_{U_i^{\Int}}\hspace{-.1em}(\hspace{-.6em}\prod_{j\in \mathcal{N}(i)}\hspace{-.7em}\{y_j\}\hspace{-.1em},\hspace{-.6em}\prod_{j\in \mathcal{N}(i)}\hspace{-.7em}\{\hat{y}_j\}\hspace{-.2em})\\& \hspace{-.2em}\leq \mu_i + \delta_i+ \max\limits_{j\in \mathcal{N}(i)} \varepsilon_j \leq \mu_i + \delta_i+ \max\limits_{j \in I} \varepsilon_j =\mu_i+\delta_i+\varepsilon.
	\end{split}
	\end{equation*}
	Hence, the $\hat{M}$- approximate composition with respect to $\mathcal{I}$ of $\{\hat{S}_i\}_{i\in I}$ is well defined in the sense of Definition \ref{def:composition}. Thus, condition (iii) in Definition \ref{Def:feedback} holds with $\hat{u}^{\Ext}=(\hat{u}_1^{\Ext},\ldots,\hat{u}_N^{\Ext})$ satisfying $\mathbf d_{U^{\Ext}}(u^{\Ext},\hat{u}^{\Ext})=\Vert \prod_{i \in I}\{\mathbf d_{U_i^{\Ext}}(u^{\Ext}_i,\hat{u}^{\Ext}_i)\} \Vert = \Vert \prod_{i \in I}\{\mu_i\} \Vert=\mu$ and $\hat{x}'=(\hat{x}_1',\ldots,\hat{x}_N')$, and one obtains $S_M\preccurlyeq^{\varepsilon,\mu}\hat{S}_{\hat{M}}$.  
\end{proof}

Intuitively, the results of the previous theorems can be interpreted as follows: the result in Theorem~\ref{thm:simulation} can be used for compositional verification. Given a collection of systems $\{S_i\}_{i\in I}$, if each component approximately satisfies a specification $Q_i$ ($S_i \preccurlyeq^{\varepsilon_i,\mu_i} Q_i$), then the composed system $S_M=\langle S_i \rangle_{i\in I}^{M,\mathcal{I}}$ approximately satisfies a composed specification $Q=\langle Q_i \rangle_{i\in I}^{\hat{M},\mathcal{I}}$ ($S \preccurlyeq^{\varepsilon,\mu} Q$). Note that for constructing controllers, the results of Theorem \ref{thm:altsimulation} is more suitable. Given a collection of components $\{S_i\}_{i\in I}$, for $i\in I$, let $\hat{S}_i$ an abstraction for $S_i$ ($\hat{S}_i\preccurlyeq^{\varepsilon_i,\mu_i}_{\mathcal{AS}}S_i$), then the composed system $\hat{S}_{\hat M}=\langle \hat{S}_i \rangle_{i\in I}^{\hat{M},\mathcal{I}}$ is an abstraction of the system $S_M=\langle S_i \rangle_{i\in I}^{M,\mathcal{I}}$ ($\hat{S}_{\hat M}\preccurlyeq^{\varepsilon,\mu}_{\mathcal{AS}}S_M$). Figure~\ref{fig:thms} illustrates these results.
\begin{figure}[t!]
	\begin{center}
		\includegraphics[scale=0.9]{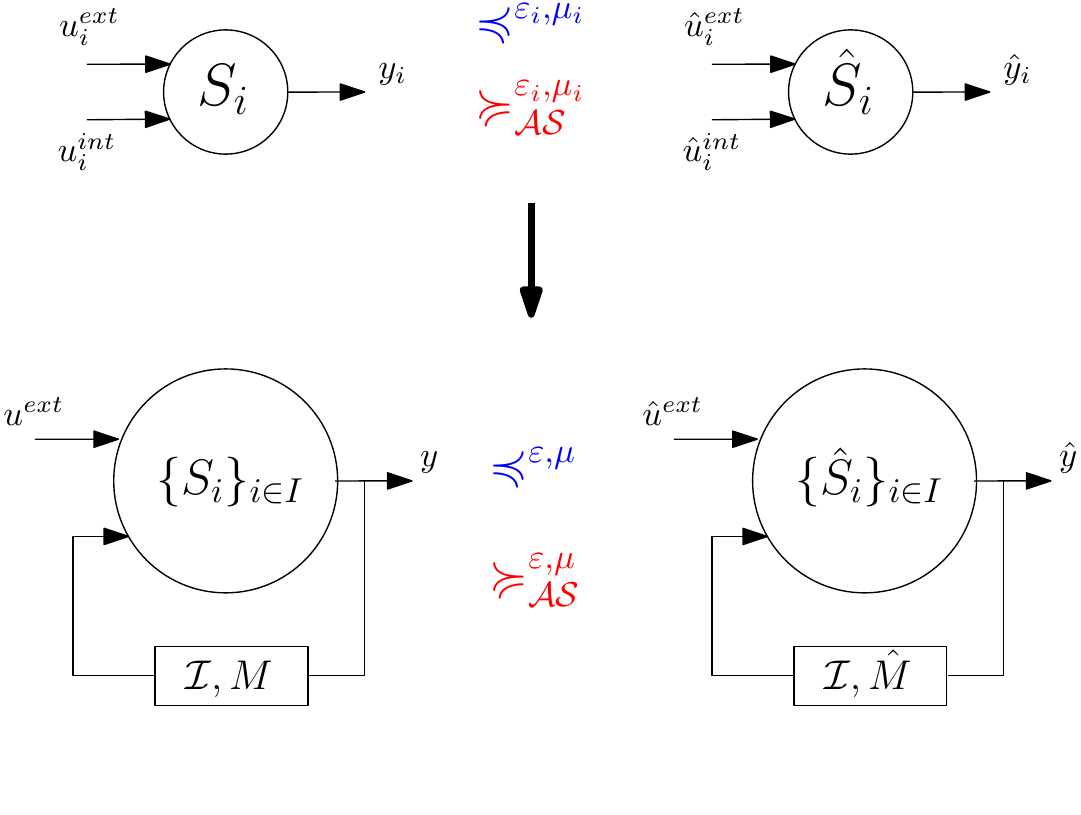}
	\end{center}
	\vspace{-0.5cm}
	\caption{Illustration of compositionality results for a collection of transition systems using the notion of approximate composition and approximate (alternating) simulation relations as formalized in Theorems~\ref{thm:altsimulation} and~\ref{thm:simulation}.}
	\label{fig:thms}
	\vspace{-0.3cm}
\end{figure}
\begin{remark}
	In symbolic control literature, different approaches have been presented to compute $($in$)$finite abstractions for different classes of systems including linear systems~\cite{belta2017formal,girard2009hierarchical}, monotone $($or mixed-monotone$)$ systems~\cite{coogan2015efficient,meyer2015adhs}, time-delay systems~\cite{pola2010symbolic}, switched systems~\cite{girard2009approximately}, incrementally stable $($or stabilizable$)$ systems~\cite{pola2008approximately}, incrementally stable stochastic $($switched$)$ systems~\cite{zamani2017towards} and incrementally stable time-delayed $($stochastic$)$ control systems~\cite{pola2010symbolic,jagtap2020symbolic}.
	Let us point out that the proposed compositional framework in this paper is suitable for different types of $($in$)$finite abstractions which allows for modularity and flexibility in the construction of the symbolic models.
\end{remark}


\section{Bottom-up safety controller synthesis}
\label{sec:control}

{In the previous section, we have shown how to derive approximate (alternating) simulation relations
	between the concrete and abstract models of the whole system from those of the components, which can mainly be used for the construction of compositional abstractions. Based on those results, in this section, we go one step further and show how the proposed compositionality results make it possible to provide a bottom-up approach for the synthesis of controllers enforcing decomposable safety specifications.}

\subsection{Controlled systems}
Consider a system $S=(X,X^0,U^{\Ext},U^{\Int},\Delta,Y,H)$ and a memoryless controller $C:X \rightrightarrows{U^{\Ext}\times U^{\Int}}$ such that for all $x\in X$, $C(x)\subseteq U^a_S(x)$. Let $\dom(C)$ be the domain of controller defined by $\dom(C)=\{x \in X \mid C(x)\neq \emptyset\} \subseteq X$. We define a controlled transition system by a tuple $S|C=(X_C, X_C^0,U_C^{\Ext},U_C^{\Int},\Delta_C,Y_C,H_C)$, where:
\begin{itemize}
	\item $X_C=X\cap \dom(C)$ is the set of states;
	\item $ X_C^0=X^0\cap \dom(C)$ is the set of initial states;
	\item $U_C^{\Ext}=U^{\Ext}$ is the set of external inputs;
	\item $U_C^{\Int}=U^{\Int}$ is the set of internal inputs;
	\item $Y_C=Y$ is the set of outputs;
	\item $H_C=H$ is the output map;
	\item a transition relation: $x_C'\in\Delta_C(x_C,u_C^{\Ext},u_C^{\Int})$ if and only if $x_C'\in\Delta(x_C,u_C^{\Ext},u_C^{\Int})$ and $(u_C^{\Ext},u_C^{\Int})\in C(x_C)$.	
\end{itemize}

We first introduce the following auxiliary lemma relating the system $S$ and the controlled system $S|C$.
\begin{lemma}
	\label{lem:controlledsystem}
	Given the systems $S$ and $S|C$ defined above, we have that $S|C \preccurlyeq^{0,0}_{\mathcal{AS}} S$.
\end{lemma}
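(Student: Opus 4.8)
The statement is that the controlled system $S|C$ is $(0,0)$-approximately alternatingly simulated by $S$, so the plan is to exhibit the obvious candidate relation and verify the three conditions of Definition~\ref{Def:altsimu} directly. The natural candidate is the identity-type relation $\mathcal{R} = \{(x_C,x)\in X_C\times X \mid x_C = x\}$, which makes sense because $X_C = X\cap\dom(C)\subseteq X$. Recall the direction: we need an approximate alternating simulation relation \emph{from} $S|C$ \emph{to} $S$, i.e.\ in the notation of Definition~\ref{Def:altsimu} we play the role of $S_2 = S|C$ and $S_1 = S$, and we must show $\mathcal{R}\subseteq X_1\times X_2 = X\times X_C$; so more precisely take $\mathcal{R} = \{(x,x_C)\in X\times X_C \mid x = x_C\}$.

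First I would check condition (i): for every $x_C^0\in X_C^0 = X^0\cap\dom(C)$ we have $x_C^0\in X^0$, and $(x_C^0,x_C^0)\in\mathcal{R}$, so the initial-state condition holds. Next, condition (ii): for $(x,x_C)\in\mathcal{R}$ we have $x = x_C$, hence $H(x) = H_C(x_C)$ since $H_C = H$, so $\mathbf{d}_Y(H(x),H_C(x_C)) = 0 \le \varepsilon$ with $\varepsilon = 0$ (using that $\mathbf{d}_Y$ is a pseudometric and vanishes on the diagonal). Finally, condition (iii): take $(x,x_C)\in\mathcal{R}$ and any enabled input $(u_C^{\Ext},u_C^{\Int})\in U^a_{S|C}(x_C)$. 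By the definition of $\Delta_C$, being enabled in $S|C$ means $(u_C^{\Ext},u_C^{\Int})\in C(x_C)\subseteq U^a_S(x_C) = U^a_S(x)$, so the same input is admissible in $S$; choosing $(u^{\Ext},u^{\Int}) = (u_C^{\Ext},u_C^{\Int})$ gives $\max(\mathbf{d}_{U^{\Ext}}(u^{\Ext},u_C^{\Ext}),\mathbf{d}_{U^{\Int}}(u^{\Int},u_C^{\Int})) = 0\le\mu$ with $\mu = 0$. Then for any successor $x'\in\Delta(x,u^{\Ext},u^{\Int})$ in $S$, I note that $\Delta_C(x_C,u_C^{\Ext},u_C^{\Int})\subseteq\Delta(x_C,u_C^{\Ext},u_C^{\Int}) = \Delta(x,u^{\Ext},u^{\Int})$, and $\Delta_C(x_C,u_C^{\Ext},u_C^{\Int})\neq\emptyset$ (since the input is enabled), so I pick some $x_C'\in\Delta_C(x_C,u_C^{\Ext},u_C^{\Int})$; it satisfies $x_C'\in\Delta(x,u^{\Ext},u^{\Int})$, but I actually need $(x',x_C')\in\mathcal{R}$, i.e.\ $x' = x_C'$ — so the correct choice is simply $x_C' := x'$, which lies in $\Delta_C(x_C,u_C^{\Ext},u_C^{\Int})$ precisely because $x'\in\Delta(x_C,u_C^{\Ext},u_C^{\Int})$ and $(u_C^{\Ext},u_C^{\Int})\in C(x_C)$. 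Hence $(x',x_C')\in\mathcal{R}$ and condition (iii) holds.

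The one subtle point — the only place requiring any care — is the interplay in condition (iii) between the universal quantifier over successors in $S$ and the matching successor in $S|C$: one must observe that $\Delta_C(x_C,\cdot,\cdot)$ with a \emph{controlled} input equals $\Delta(x_C,\cdot,\cdot)$ (not a strict subset), because $\Delta_C$ only prunes the inputs, never the successor states for a retained input. This is exactly what lets the chosen successor in $S$ be mirrored back into $S|C$. Everything else is bookkeeping about the definitions of $X_C$, $X_C^0$, $H_C$, and $\Delta_C$, and the fact that pseudometrics vanish on the diagonal, so there is no real obstacle; the lemma is essentially immediate from the construction of $S|C$.
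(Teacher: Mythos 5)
Your proof is correct and follows essentially the same route as the paper's: the identity relation on $X\times X_C$, the inclusion $X_C^0\subseteq X^0$, equality of outputs, and the key observation that for an input retained by the controller one has $\Delta_C(\overline{x},u^{\Ext},u^{\Int})=\Delta(\overline{x},u^{\Ext},u^{\Int})$, so any $S$-successor can be mirrored by choosing $\overline{x}'=x'$. No gaps; the brief detour in condition (iii) where you first ``pick some'' successor before settling on $x_C':=x'$ is self-corrected and harmless.
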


\begin{proof}
		Define the relation $\mathcal{R}=\{(x,\overline{x}) \in X \times X_C \mid x=\overline{x}\}$. We have that $ X_C^0=X^0\cap \dom(C)\subseteq X^0$, hence the first condition of Definition~\ref{Def:altsimu} is satisfied. Let $(x,\overline{x}) \in \mathcal{R}$. We have that $\mathbf d_Y(H(x),H_C(\overline{x}))=\mathbf d_Y(H(x),H(x))=0$ which shows condition (ii) of Definition~\ref{Def:altsimu}. Now consider $(x,\overline{x}) \in \mathcal{R}$ and any $(\overline u^{\Ext},\overline u^{\Int}) \in U_{S|C}^a(\overline{x})$. We choose $u^{\Ext}=\overline u^{\Ext}$ and $u^{\Int}=\overline u^{\Int}$ with $(u^{\Ext},u^{\Int}) \in U_{S}^a({x})$ and $(u^{\Ext},u^{\Int}) \in C(\overline{x})$. Then for all ${x}'\in \Delta({x},u^{\Ext},u^{\Int})$ we have $x'\in \Delta_C({x},u^{\Ext},u^{\Int})=\Delta_C(\overline x,u^{\Ext},u^{\Int})$. Since $x'\in \Delta_C(\overline x,u^{\Ext},u^{\Int})$, we have the existence of $\overline x' \in \Delta_C(\overline x,u^{\Ext},u^{\Int})$ satisfying  $\overline x'=x'$. This implies $(x',\overline{x}')  \in \mathcal{R}$, which concludes the proof.  
\end{proof}

\subsection{Safety controller}

\begin{definition}
	\label{safety}
	A safety controller $C$ for the transition system $S$ and the safe set $\mathfrak S\subseteq X$ satisfies:
	\begin{itemize}
		\item[(i)] $\dom(C) \subseteq \mathfrak S$;
		\item[(ii)]  $\forall x \hspace{-0.2em}\in \hspace{-0.2em}\dom(C)$ and $\forall (u^{\Ext},u^{\Int})\hspace{-0.2em}\in\hspace{-0.2em} C(x)$,  $\Delta(x,u^{\Ext},u^{\Int}) \hspace{-0.2em}\subseteq \hspace{-0.2em}\dom(C)$.
	\end{itemize}
\end{definition}

There are in general several controllers that solve the safety problem. A suitable solution is a controller that enables as many actions as possible. Such controller $C^*$ is referred to as \textit{a maximal safety controller}, in the sense that for any other safety controller $C$ and for all $x\in X$, we have $C(x)\subseteq C^*(x)$. In order to define carefully the maximal safety controller, we introduce the concept of a controlled invariant set.

\begin{definition}
	\label{def:inv}
	Consider a transition system $S$ and a safe set $\mathfrak S \subseteq X$. A subset $A \subseteq \mathfrak S$ is said to be a controlled invariant if for all $x \in A$ there exists $(u^{\Ext},u^{\Int}) \in U^{\Ext} \times U^{\Int}$ such that $\Delta(x,u^{\Ext},u^{\Int})\subseteq A$.
\end{definition}

It was shown in~\cite{tabuada2009verification} that there exists a maximal controlled invariant $\Cont(\mathfrak S)$ which is the union of all controlled invariants. The \textit{maximal safety controller} can be defined as follows: 
\begin{itemize}
	\item for all $x\notin \Cont(\mathfrak S)$, $C^*(x)= \emptyset$;
	\item for all $x\in \Cont(\mathfrak S)$, $C^*(x)=\{(u^{\Ext},u^{\Int}) \in U^a(x)\mid \Delta(x,u^{\Ext},u^{\Int})\subseteq \Cont(\mathfrak S) \}$.
\end{itemize}
Let us remark that for any safety controller $C$ we have that $\dom(C) \subseteq \Cont(\mathfrak S)$, while for the \textit{maximal safety controller}  $C^*$, we have  $\dom(C^*) = \Cont(\mathfrak S)$.

\subsection{Bottom-up synthesis of controllers}
The size of transition systems is crucial for computational efficiency of discrete safety controller synthesis algorithms. As the size of transition systems grows, the classical safety synthesis suffers from the curse of dimensionality. In this subsection, we show how to synthesize safety controllers for interconnected systems using a bottom-up approach. Consider a global system $S$ made of $N$ interconnected components $S_i$, $i \in I$, and a global decomposable safety specification $\mathfrak S=\mathfrak S_1\times \ldots\times \mathfrak S_N$. We start by synthesizing a local safety controller $C_i$ for each component $S_i$ and safety specification $\mathfrak S_i$, compose the local controlled components (by computing $\langle S_i|C_i \rangle_{i\in I}^{M,\mathcal{I}}$), and then synthesize a global safety controller for $\langle S_i|C_i \rangle_{i\in I}^{M,\mathcal{I}}$ against the safety specification $\mathfrak S$. We first give an example illustrating the idea of bottom-up safety synthesis.

\begin{figure}[h]
	\begin{center}
		\includegraphics[scale=0.12]{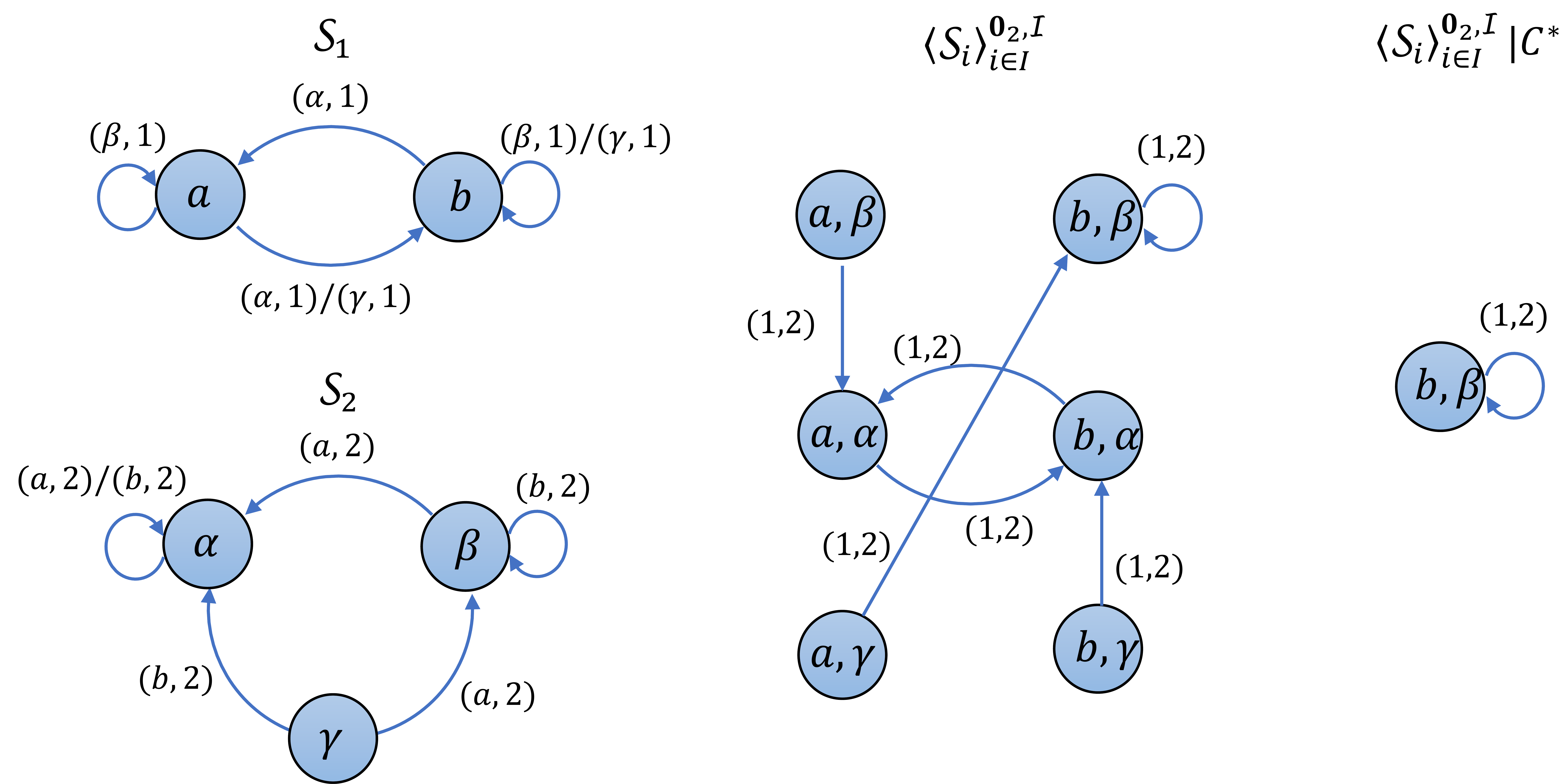}
		\vspace{-0.2em}
		\caption{Two transition systems $S_1$ and $S_2$, the composed system $\langle S_i \rangle_{i\in I}^{\mathbf{0}_2,\mathcal{I}}$ and the controlled system $\langle S_i \rangle_{i\in I}^{\mathbf{0}_2,\mathcal{I}} |C^*$.}\label{fig:example1}
		\vspace{0.2cm}
		
		\includegraphics[scale=0.12]{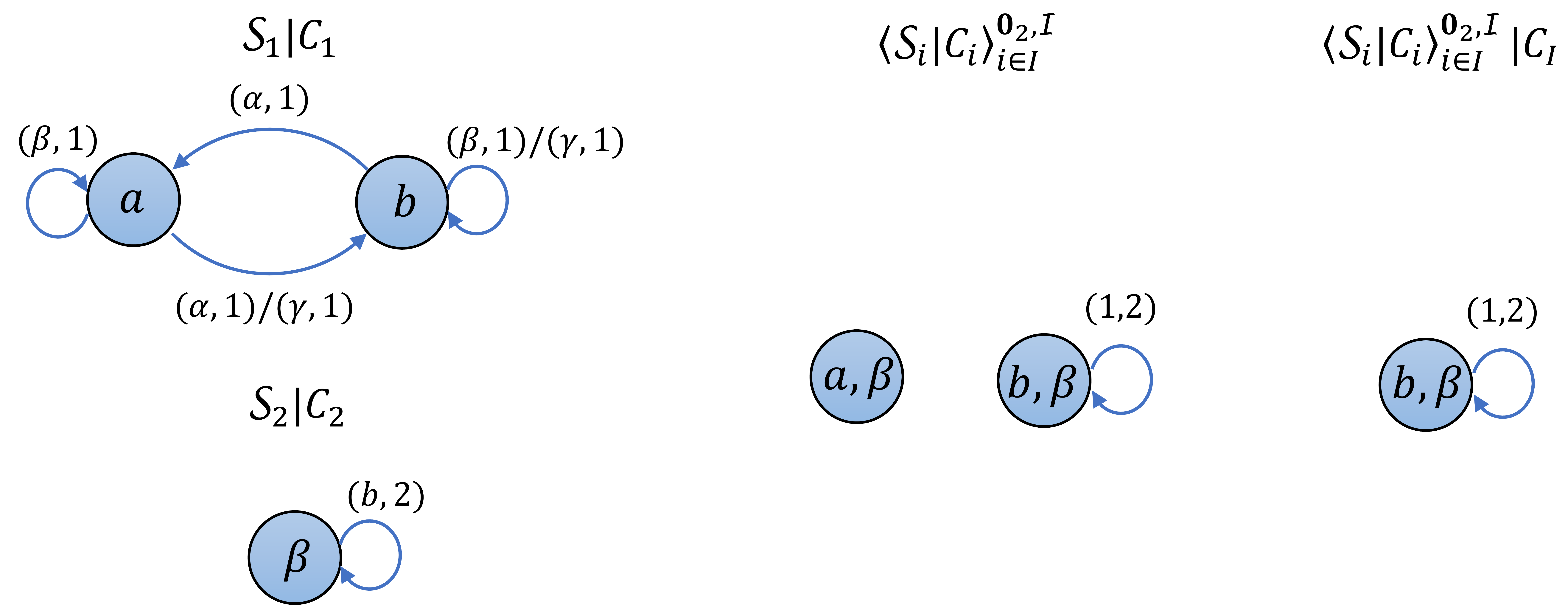}
		\vspace{-0.2em}
		\caption{The controlled components $S_1|C_1$ and $S_2|C_2$, the composition of controlled components $\langle S_i|C_i \rangle_{i\in I}^{\mathbf{0}_2,\mathcal{I}}$ which is finally controlled by $C_I$ ($\langle S_i|C_i \rangle_{i\in I}^{\mathbf{0}_2,\mathcal{I}}|C_I$).}\label{fig:example2}

	\end{center}
	\vspace{-0.3cm}
\end{figure}

\begin{example}
	Consider the transition systems $S_1=(\{a,b\},\{a,b\},\{1\},\{\alpha,\beta,\gamma\},\Delta_1,\{a,b\},Id)$ and $S_2=(\{\alpha,\beta,\gamma\},\{\alpha,\beta,\gamma\},\{2\},\{a,b\},\Delta_2,\{\alpha,\beta,\gamma\},Id)$ as shown in Figure~\ref{fig:example1}, where $Id$ is the identity map. Let the interconnection relation be $\mathcal{I}=\{(1,2),(2,1)\}$. 
	Since $Y_1 \subseteq U^{\Int}_2$ and $Y_2 \subseteq U^{\Int}_1$, the components $S_1$ and $S_2$ are compatible for exact composition with respect to $\mathcal{I}$. Let $\langle S_{i} \rangle_{i\in I}^{\mathbf{0}_2,\mathcal{I}}$ be the composed system.
	Let the global safety specification for the system $S$ define by $\mathfrak S=\mathfrak S_1\times \mathfrak S_2$, where $\mathfrak S_1=\{a,b\}$ and $\mathfrak S_2=\{\beta\}$. The classical safety approach directly synthesize a maximal safety controller for the system $\langle S_{i} \rangle_{i\in I}^{\mathbf{0}_2,\mathcal{I}}$. An illustration of the controlled system $\langle S_{i} \rangle_{i\in I}^{\mathbf{0}_2,\mathcal{I}}|C^*$ is given in Figure~\ref{fig:example1}. In the proposed bottom-up approach, see Figure~\ref{fig:example2}, we first start by synthesizing a local controller $C^*_i$ for the component $S_i$ against the safety specification $\mathfrak S_i$ to obtain $S_i|C^*_i$. We then compose the local controlled components by computing $\langle S_{i}|C_i \rangle_{i\in I}^{\mathbf{0}_2,\mathcal{I}}$. Finally we synthesize a global safety controller $C_I$ for the system $\langle S_{i}|C_i \rangle_{i\in I}^{\mathbf{0}_2,\mathcal{I}}$ against the safety specification $\mathfrak S$. In the classical safety synthesis, we need to compute the safety controller for the system $\langle S_{i} \rangle_{i\in I}^{\mathbf{0}_2,\mathcal{I}}$, which is made of $6$ states and $6$ transitions. In the proposed bottom-up synthesis, we need to apply the global safety synthesis for the reduced composed system  $\langle S_{i}|C_i \rangle_{i\in I}^{\mathbf{0}_2,\mathcal{I}}$ which is made of $2$ states and $1$ transition\footnote{Let us mention that computational complexity to compute the local controllers $C^*_i$ for components $S_i$ is imperceptible with comparison to the safety synthesis on the global reduced composed system $\langle S_{i}|C_i \rangle_{i\in I}^{\mathbf{0}_2,\mathcal{I}}$.}. Hence, one can notice in this toy example the benefits of the proposed synthesis while ensuring completeness with respect to the classical safety synthesis ($C^*=C_I$).
\end{example}

We start by providing the following auxiliary lemma showing how the maximal safety controller $C^*$ for the composed system $S=\langle S_{i} \rangle_{i\in I}^{M,\mathcal{I}}$ is related to the maximal controllers $C^*_i$ synthesized for the components $S_i$, $i\in I$.

\begin{lemma}
	\label{lem:controllers}
	Let $\{S_i\}_{i\in I}$ be a collection of transition systems compatible for $M$-approximate composition, with $M=(\delta_1,\ldots,\delta_N)$. Let $S_M=\langle S_{i} \rangle_{i\in I}^{M,\mathcal{I}}$ be the composed system. Let $\mathfrak S=\mathfrak S_1\times \ldots \times \mathfrak S_N$ be a safety specification for the composed system and let us assume the following:
	\begin{itemize}
		\item $C^*_i$, $i \in I,$ is the maximal safety controller for $S_i$ enforcing the specification $\mathfrak S_i$;
		\item $C^*$ is the maximal controller for $S_M$ enforcing the safety specification $\mathfrak S$.
	\end{itemize}
	If $(u_1^{\Ext},\ldots,u_N^{\Ext}) \in C^*(x_1,\ldots,x_N)$ and  $d_{U^{\Int}_i}(u^{\Int}_i,\prod_{j\in \mathcal{N}(i)}\{H_j(x_j)\}) \leq \delta_i$ for some $i\in I$ and some $u_i^{\Int}\in U_i^{\Int}$, then we have $(u_i^{\Ext},u_i^{\Int}) \in C^*_i(x_i)$.
\end{lemma}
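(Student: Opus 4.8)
The plan is to work through the characterization of the maximal safety controller via the maximal controlled invariant set, and to show that the maximal controlled invariant set $\Cont(\mathfrak{S})$ of $S_M$ projects coordinatewise into the sets $\Cont(\mathfrak{S}_i)$. Write $\pi_i$ for the projection of $X=\prod_{k\in I}X_k$ onto $X_i$, and recall that $\dom(C^*)=\Cont(\mathfrak{S})$, $\dom(C^*_i)=\Cont(\mathfrak{S}_i)$, with $C^*(x)=\{u^{\Ext}\in U^a_{S_M}(x)\mid \Delta_M(x,u^{\Ext})\subseteq\Cont(\mathfrak{S})\}$ on its domain (and likewise for $C^*_i$). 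The first task is to prove $\pi_i(\Cont(\mathfrak{S}))\subseteq\Cont(\mathfrak{S}_i)$, which I would do by checking that $A_i:=\pi_i(\Cont(\mathfrak{S}))$ is a controlled invariant set for $S_i$ relative to $\mathfrak{S}_i$ in the sense of Definition~\ref{def:inv} and invoking maximality of $\Cont(\mathfrak{S}_i)$. Here $A_i\subseteq\mathfrak{S}_i$ follows from $\Cont(\mathfrak{S})\subseteq\mathfrak{S}=\mathfrak{S}_1\times\cdots\times\mathfrak{S}_N$; and for the invariance property, given $x_i\in A_i$ one picks $x=(x_1,\dots,x_N)\in\Cont(\mathfrak{S})$ above it, a control $u^{\Ext}=(u_1^{\Ext},\dots,u_N^{\Ext})\in C^*(x)$ (nonempty since $x\in\dom(C^*)$, and contained in $U^a_{S_M}(x)$, so $\Delta_M(x,u^{\Ext})\neq\emptyset$), and reads off the $i$-th clause of the definition of $\Delta_M$ from some element of $\Delta_M(x,u^{\Ext})$ to obtain $u_i^{\Int}$ with $\mathbf{d}_{U_i^{\Int}}(u_i^{\Int},\prod_{j\in\mathcal{N}(i)}\{H_j(x_j)\})\leq\delta_i$ and $(u_i^{\Ext},u_i^{\Int})\in U^a_{S_i}(x_i)$. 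It then remains to show $\Delta_i(x_i,u_i^{\Ext},u_i^{\Int})\subseteq A_i$.

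The key step is a lifting lemma: whenever $x\in\Cont(\mathfrak{S})$, $u^{\Ext}\in U^a_{S_M}(x)$, and $u_i^{\Int}$ satisfies $\mathbf{d}_{U_i^{\Int}}(u_i^{\Int},\prod_{j\in\mathcal{N}(i)}\{H_j(x_j)\})\leq\delta_i$, one has $\Delta_i(x_i,u_i^{\Ext},u_i^{\Int})\subseteq\pi_i(\Delta_M(x,u^{\Ext}))$. Given $x_i'\in\Delta_i(x_i,u_i^{\Ext},u_i^{\Int})$, I would pick any $\tilde{x}\in\Delta_M(x,u^{\Ext})$ (nonempty by admissibility), retain the internal inputs and successors $\tilde{x}_j$ witnessing its defining clauses for $j\neq i$, and splice: $x'=(\tilde{x}_1,\dots,\tilde{x}_{i-1},x_i',\tilde{x}_{i+1},\dots,\tilde{x}_N)$. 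Since the defining condition of $\Delta_M$ is a conjunction over $i\in I$ of clauses each involving only the $i$-th components of $x$, $u^{\Ext}$, $x'$ and the fixed neighbour outputs $H_j(x_j)$, checking clause by clause shows $x'\in\Delta_M(x,u^{\Ext})$ — the $i$-th clause witnessed by $(u_i^{\Ext},u_i^{\Int})$ and $x_i'$, the others inherited from $\tilde{x}$ — hence $x_i'=\pi_i(x')\in\pi_i(\Delta_M(x,u^{\Ext}))$. Combined with $\Delta_M(x,u^{\Ext})\subseteq\Cont(\mathfrak{S})$ this finishes the controlled-invariance check of the previous step.

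For the statement itself, let $(u_1^{\Ext},\dots,u_N^{\Ext})\in C^*(x_1,\dots,x_N)$ and let $u_i^{\Int}$ satisfy $\mathbf{d}_{U_i^{\Int}}(u_i^{\Int},\prod_{j\in\mathcal{N}(i)}\{H_j(x_j)\})\leq\delta_i$. Unwinding the definition of $C^*$ gives $x\in\Cont(\mathfrak{S})$, $u^{\Ext}\in U^a_{S_M}(x)$, and $\Delta_M(x,u^{\Ext})\subseteq\Cont(\mathfrak{S})$. The projection result then yields $x_i\in\Cont(\mathfrak{S}_i)=\dom(C^*_i)$, and the lifting lemma yields $\Delta_i(x_i,u_i^{\Ext},u_i^{\Int})\subseteq\pi_i(\Delta_M(x,u^{\Ext}))\subseteq\pi_i(\Cont(\mathfrak{S}))\subseteq\Cont(\mathfrak{S}_i)$; the same construction shows this transition set is nonempty (every local successor lies under a genuine successor of $S_M$ from the non-blocking state $x$), so $(u_i^{\Ext},u_i^{\Int})\in U^a_{S_i}(x_i)$. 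By the explicit description of $C^*_i$ these two facts are precisely $(u_i^{\Ext},u_i^{\Int})\in C^*_i(x_i)$.

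I expect the main obstacle to be the lifting step — verifying that a local successor of component $i$ always completes to a global successor of $S_M$, and conversely that global transitions restrict to local ones realized by distance-feasible internal inputs. This hinges on the coordinatewise form of $\Delta_M$ (so coordinates can be chosen independently once $x$ and $u^{\Ext}$ are fixed) and on $u^{\Ext}$ being admissible in $S_M$, which supplies compatible internal inputs and successors for the components $j\neq i$; the compatibility of $\{S_i\}_{i\in I}$ for $M$-approximate composition is exactly what guarantees distance-feasible internal inputs exist in the first place. The remainder is routine manipulation of Definitions~\ref{def:composition} and~\ref{def:inv} and of the characterization of maximal safety controllers.
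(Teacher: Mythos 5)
Your proof is correct and takes essentially the same route as the paper's. The paper projects the global maximal controller $C^*$ onto component $i$ (it defines an auxiliary local controller $C_i$ whose values are exactly the local pieces of global controls in $C^*$, verifies that $C_i$ is a safety controller for $(S_i,\mathfrak S_i)$, and concludes by maximality of $C^*_i$), whereas you project the maximal controlled invariant set $\Cont(\mathfrak S)$ and conclude by maximality of $\Cont(\mathfrak S_i)$; via the characterization $C^*_i(x_i)=\{(u_i^{\Ext},u_i^{\Int})\in U^a_{S_i}(x_i)\mid \Delta_i(x_i,u_i^{\Ext},u_i^{\Int})\subseteq\Cont(\mathfrak S_i)\}$ these are dual formulations of one argument, and the common technical core is precisely your splicing/lifting step $\Delta_i(x_i,u_i^{\Ext},u_i^{\Int})\subseteq\pi_i(\Delta_M(x,u^{\Ext}))$, which the paper uses implicitly when it asserts $x_i'\in\dom(C_i)$ for every local successor. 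One caveat: your claim that $\Delta_i(x_i,u_i^{\Ext},u_i^{\Int})\neq\emptyset$ for the \emph{given} $u_i^{\Int}$ does not follow from $\Delta_M(x,u^{\Ext})\neq\emptyset$ --- the definition of $\Delta_M$ only yields a nonempty local transition for \emph{some} distance-feasible internal input, not necessarily the one in the hypothesis --- so admissibility of $(u_i^{\Ext},u_i^{\Int})$ in $S_i$ is effectively an unstated assumption of the lemma; the paper's own proof elides the same point, so this is a wrinkle in the statement rather than a defect of your approach.
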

\begin{proof}
		For $i\in I$, let us define the controller $C_i:X_i\rightrightarrows U^{\Ext}_i \times U^{\Int}_i$ as follows: $(u_i^{\Ext},u_i^{\Int}) \in C_i(x_i)$ if and only if there exists $(x_1,\ldots,x_{i-1},x_{i+1},\ldots,x_N)^T\in X_1\times\ldots\times X_{i-1}\times X_{i+1}\times \ldots \times X_N$ and there exists $(u^{\Ext}_1,\ldots,u^{\Ext}_{i-1},u^{\Ext}_{i+1},\ldots,u^{\Ext}_N)^T \in U^{\Ext}_1\times\ldots\times U^{\Ext}_{i-1}\times U^{\Ext}_{i+1}\times \ldots \times U^{\Ext}_N$ such that $(u_1^{\Ext},\ldots,u_N^{\Ext})^T \in C^*(x_1,\ldots,x_N)$ and $\mathbf d_{U^{\Int}_i}(u^{\Int}_i,\prod_{j\in \mathcal{N}(i)}\{H_j(x_j)\}) \leq \delta_i$.
		Let us prove that $C_i$ is a safety controller for $S_i$ and safety specification $\mathfrak S_i$.
		
		Let $x_i \in \dom(C_i)$, then by construction of $C_i$ we have the existence of  $(x_1,\ldots,x_{i-1},x_{i+1},\ldots,x_N)^T \in X_1\times\ldots\times X_{i-1}\times X_{i+1}\times \ldots \times X_N$ such that $(x_1,\ldots,x_N)^T \in \dom(C^*) \subseteq \mathfrak S=\mathfrak S_1\times \ldots \times \mathfrak S_N$. Hence, $x_i \in \mathfrak S_i$ and $\dom(C_i)\subseteq \mathfrak S_i$, then condition (i) of Definition~\ref{safety} is satisfied. Now let $x_i \in \dom(C_i)$ and $(u_i^{\Ext},u_i^{\Int}) \in C_i(x_i)$. We have the existence of $(x_1,\ldots,x_{i-1},x_{i+1},\ldots,x_N)^T \in X_1\times\ldots\times X_{i-1}\times X_{i+1}\times \ldots \times X_N$ and  $(u^{\Ext}_1,\ldots,u^{\Ext}_{i-1},u^{\Ext}_{i+1},\ldots,u^{\Ext}_N)^T \in  U^{\Ext}_1\times\ldots\times U^{\Ext}_{i-1}\times U^{\Ext}_{i+1}\times \ldots \times U^{\Ext}_N$ such that $(u_1^{\Ext},\ldots,u_N^{\Ext})^T \in C^*(x_1,\ldots,x_N)$ and $d_{U^{\Int}_i}(u^{\Int}_i,\prod_{j\in \mathcal{N}(i)}\{H_j(x_j)\}) \leq \delta_i$. Since $C^*$ is the maximal safety controller for the system $S$ and safety specification $\mathfrak S$, we have that for all $(x_1',\ldots,x_N')^T \in \Delta_{C^*}(x_1,\ldots,x_N,u^{\Ext}_1,\ldots,u^{\Ext}_N)$, $(x_1',\ldots, x_N')^T \in \dom(C^*)$. Hence, for $i \in I$, $x_i'\in \dom(C_i)$ for all $x_i'\in \Delta_i(x_i,u_i^{\Ext},u_i^{\Int})$ where $u_i^{\Int} \in U_i^{\Int}$ is such that $\mathbf d_{U^{\Int}_i}(u^{\Int}_i,\prod_{j\in \mathcal{N}(i)}\{H_j(x_j)\}) \leq \delta_i$. Then, $C_i$ is a safety controller for the component $S_i$ and safety specification $\mathfrak S_i$. 
		
		Now let $(u_1^{\Ext},\ldots,u_N^{\Ext})^T \in C^*(x_1,\ldots,x_N)$. Then from construction of $C_i$, for all $i\in I$, and for all $u_i^{\Int}\in U_i^{\Int}$ such that $\mathbf d_{U^{\Int}_i}(u^{\Int}_i,\prod_{j\in \mathcal{N}(i)}\{H_j(x_j)\}) \leq \delta_i$, we get $(u_i^{\Ext},u_i^{Int}) \in C_i(x_i) \subseteq C^*_i(x_i)$, where the last inclusion follows from the maximality of the controller $C_i^*$ for the component $S_i$ and specification $\mathfrak S_i$, which concludes the proof.
\end{proof}	

Next, we provide theorem showing the completeness of the proposed bottom-up controller synthesis procedure with respect to the maximal monolithic safety controller $C^*$.

\begin{theorem}
	\label{thm:completeness}
	Let $\{S_i\}_{i\in I}$ be a collection of transition systems compatible for $M$-approximate composition, with $M=(\delta_1,\ldots,\delta_N)$. Let $S_M=\langle S_{i} \rangle_{i\in I}^{M,\mathcal{I}}$ be the composed system. Let $\mathfrak S=\mathfrak S_1\times \ldots \times \mathfrak S_N$ be a safety specification for the composed system and let us assume the following:
	\begin{itemize}
		\item $C^*_i$, $i \in I,$ is the maximal safety controller for $S_i$ enforcing the specification $\mathfrak S_i$;
		\item $C^*$ is the maximal controller for $S_M$ enforcing the safety specification $\mathfrak S$;
		\item $C_I$ is the maximal controller for $\langle S_i|C_i^* \rangle_{i\in I}^{M,\mathcal{I}}$   enforcing the safety specification $\mathfrak S$.
	\end{itemize}
	Then, for all $x \in X$, $C^*(x)=C_I(x)$. 
\end{theorem}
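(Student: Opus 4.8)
The plan is to reduce the statement to an equality of maximal controlled–invariant sets and then prove the two resulting inclusions, one via Lemma~\ref{lem:controllers} and the other via the compositional abstraction result. Write $Z^* = \Cont(\mathfrak S)$ for the maximal controlled invariant of $S_M$ with respect to $\mathfrak S$, so that $\dom(C^*)=Z^*$; similarly $Z_i^* = \Cont(\mathfrak S_i)$, so $\dom(C_i^*)=Z_i^*$ and the state set of $S_i|C_i^*$ is $X_i\cap Z_i^*$; and let $\tilde S = \langle S_i|C_i^* \rangle_{i\in I}^{M,\mathcal I}$ with $Z_I$ its maximal controlled invariant with respect to $\mathfrak S$, so $\dom(C_I)=Z_I$. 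Since $C^*$ and $C_I$ are the maximal safety controllers of $S_M$ and of $\tilde S$, it suffices to prove that $Z^*=Z_I$ and that $C^*(x)=C_I(x)$ for every $x\in Z^*$; for $x\notin Z^*$ both sides are $\emptyset$.

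First I would prove $Z^*\subseteq Z_I$ together with $C^*(x)\subseteq C_I(x)$ on $Z^*$, using Lemma~\ref{lem:controllers}. For $x=(x_1,\dots,x_N)\in Z^*$, pick $u^{\Ext}\in C^*(x)$ and, by compatibility for $M$-approximate composition, internal inputs $u_i^{\Int}$ with $\mathbf d_{U_i^{\Int}}(u_i^{\Int},\prod_{j\in\mathcal N(i)}\{H_j(x_j)\})\le\delta_i$; Lemma~\ref{lem:controllers} then gives $(u_i^{\Ext},u_i^{\Int})\in C_i^*(x_i)$, so each $x_i\in Z_i^*$ and $x$ is a state of $\tilde S$. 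Next, for such $x$ and $u^{\Ext}\in C^*(x)$ I claim $\tilde\Delta_M(x,u^{\Ext})=\Delta_M(x,u^{\Ext})$: the inclusion ``$\subseteq$'' is immediate since $\tilde S$'s transitions are among those of $S_M$, and for ``$\supseteq$'' any successor of $S_M$ is realized by an admissible pair $(u_i^{\Ext},u_i^{\Int})$ with $u_i^{\Int}$ within $\delta_i$ of the neighbour outputs, which by Lemma~\ref{lem:controllers} lies in $C_i^*(x_i)$, hence the same successor is available in $S_i|C_i^*$. Therefore $\tilde\Delta_M(x,u^{\Ext})=\Delta_M(x,u^{\Ext})\subseteq Z^*$ and is nonempty, which shows at once that $Z^*$ is a controlled invariant of $\tilde S$ contained in $\mathfrak S$ (hence $Z^*\subseteq Z_I$ by maximality) and that $u^{\Ext}\in C_I(x)$, i.e.\ $C^*(x)\subseteq C_I(x)$.

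For the reverse inclusions I would invoke the abstraction result. By Lemma~\ref{lem:controlledsystem}, $S_i|C_i^*\preccurlyeq^{0,0}_{\mathcal{AS}}S_i$ for every $i$, and $\{S_i|C_i^*\}_{i\in I}$ is again compatible for $M$-approximate composition (the internal-input sets are unchanged and the output sets only shrink). Theorem~\ref{thm:altsimulation}, applied with $\varepsilon_i=\mu_i=0$ and $\hat M=M$, then yields $\tilde S\preccurlyeq^{0,0}_{\mathcal{AS}}S_M$ with the relation $\mathcal R=\{(x,\hat x)\mid x_i=\hat x_i\text{ for all }i\}$, i.e.\ the identity on the states of $\tilde S$. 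For $x\in Z_I$ and any $\hat u^{\Ext}\in C_I(x)\subseteq U^a_{\tilde S}(x)$, condition (iii) of Definition~\ref{Def:altsimu} (with $\mu=0$, forcing the matching input to equal $\hat u^{\Ext}$) gives $\hat u^{\Ext}\in U^a_{S_M}(x)$ and the property that every $x'\in\Delta_M(x,\hat u^{\Ext})$ is matched by some $\hat x'\in\tilde\Delta_M(x,\hat u^{\Ext})$ with $x'=\hat x'$; since $\tilde\Delta_M(x,\hat u^{\Ext})\subseteq Z_I$ this forces $\Delta_M(x,\hat u^{\Ext})\subseteq Z_I$. Hence $Z_I$ is a controlled invariant of $S_M$ contained in $\mathfrak S$, so $Z_I\subseteq Z^*$, and with the first part $Z^*=Z_I$. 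The same computation shows that for $x\in Z^*=Z_I$ every $u^{\Ext}\in C_I(x)$ satisfies $\Delta_M(x,u^{\Ext})\subseteq Z^*$, i.e.\ $u^{\Ext}\in C^*(x)$, so $C_I(x)\subseteq C^*(x)$; combining the two directions gives $C^*(x)=C_I(x)$ for all $x\in X$.

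The step I expect to be the main obstacle is controlling the mismatch between $\Delta_M$ and $\tilde\Delta_M$: $S_M$ carries extra nondeterminism coming from the freedom to pick any internal input within the $\delta_i$-ball around the neighbour outputs, and one must guarantee that these extra successors are always reproduced by $\tilde S$. In the first direction this is supplied by the ``for every admissible internal input'' reach of Lemma~\ref{lem:controllers}, and in the second it is precisely what the alternating simulation $\tilde S\preccurlyeq^{0,0}_{\mathcal{AS}}S_M$ from Theorem~\ref{thm:altsimulation} encodes; the remaining care lies in checking the hypotheses of Theorem~\ref{thm:altsimulation} for the controlled components (compatibility of $\{S_i|C_i^*\}$ and the relation being the identity) and in consistently distinguishing the maximal controlled invariant from the domain of the maximal controller throughout.
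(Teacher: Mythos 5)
Your proof is correct and follows essentially the same route as the paper's: one inclusion via Lemma~\ref{lem:controlledsystem} and Theorem~\ref{thm:altsimulation} (giving $\langle S_i|C_i^* \rangle_{i\in I}^{M,\mathcal{I}}\preccurlyeq^{0,0}_{\mathcal{AS}}S_M$ and controller refinement), and the other via Lemma~\ref{lem:controllers} (showing that on $\dom(C^*)$ with inputs from $C^*$ the transitions of $S_M$ and of $\langle S_i|C_i^* \rangle_{i\in I}^{M,\mathcal{I}}$ coincide). Your reformulation through the maximal controlled invariants $\Cont(\mathfrak S)$ is only a presentational variant of the paper's explicit $(0,0)$-alternating-simulation argument between $S_M|C^*$ and $\langle S_i|C_i^* \rangle_{i\in I}^{M,\mathcal{I}}$, and if anything makes the controller-refinement step more explicit than the paper does.
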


\begin{proof}
	From Lemma~\ref{lem:controlledsystem}, we have for all $i \in I$, $S_i|C_i^* \preccurlyeq^{0,0}_{\mathcal{AS}} S_i$, then it follows from Theorem~\ref{thm:altsimulation} that $\langle S_i|C_i^* \rangle_{i\in I}^{M,\mathcal{I}} \preccurlyeq^{0,0}_{\mathcal{AS}} S_M=\langle S_i \rangle_{i\in I}^{M,\mathcal{I}}$. Since $C_I$ is the maximal safety controller for $\langle S_i|C_i^* \rangle_{i\in I}^{M,\mathcal{I}}$ and safety specification $\mathfrak S$ and from definition of the alternating simulation relation~\cite{tabuada2009verification}, we have that $C_I$ is a safety controller for the system $S_M$ and specification $\mathfrak S$. Then from maximality of $C^*$, we have that $C_I(x) \subseteq C^*(x)$ for all $x \in X$.
	
	To prove the second inclusion, let us first show that $S_M|C^* \preccurlyeq^{0,0}_{\mathcal{AS}}  \langle S_i|C_i^* \rangle_{i\in I}^{M,\mathcal{I}}$, with $S_M|C^*=(X_{C^*},X_{C^*}^{0},U_{C^*}^{\Ext}$, $U_{C^*}^{\Int},\Delta_{C^*},Y_{C^*},H_{C^*})$ and $\langle S_i|C_i^* \rangle_{i\in I}^{M,\mathcal{I}}=(X_{I},X_{I}^{0},U_{I}^{\Ext},U_{I}^{\Int},\Delta_{I},Y_{I},H_{I})$. Let the relation $\mathcal{R}$ defined by
	$\mathcal{R}=\{(x,\overline{x}) \in X_I \times X_{C^*}\mid x=\overline{x}\}.$
	
	From Lemma~\ref{lem:controllers} we have that $\dom(C^*)\subseteq \dom(C_1)\times \ldots \times \dom(C_N)$, hence, $X_{C^*}^0=X^0\cap\dom(C^*) \subseteq X_{I}^{0}= X^0\cap(\dom(C^*_1)\times \ldots \times \dom(C^*_N))$ and the first condition of Definition~\ref{Def:altsimu} is satisfied. Let $(x,\overline{x}) \in\mathcal{R}$, we have that $x=\overline{x}$, hence, $H_{C^*}(x)=H_I(\overline{x})=H(x)$ and condition (ii) of Definition~\ref{Def:altsimu} is satisfied. Now, let $(x,\overline{x}) \in\mathcal{R}$ and $(u_1^{\Ext},\ldots,u_N^{\Ext})^T \in U^a_{S_M|C^*}(x)$. Then $(u_1^{\Ext},\ldots,u_N^{\Ext})^T \in U^a_{S_M}(x)$ and  $(u_1^{\Ext},\ldots,u_N^{\Ext})^T \in C^*(x)=C^*(x_1,\ldots,x_N)$. We have from Lemma~\ref{lem:controllers} that for all $i \in I$, $(u_i^{\Ext},u_i^{\Int}) \in C^*_i(x_i)$ for any $u_i^{\Int} \in U_i^{\Int}$ satisfying $\mathbf d_{U^{\Int}_i}(u^{\Int}_i,\prod_{j\in \mathcal{N}(i)}\{H_j(x_j)\}) \leq \delta_i$. Then, by construction of the transition systems $S_M|C^*$ and $\langle S_i|C_i^* \rangle_{i\in I}^{M,\mathcal{I}}$, we have for all $\overline{x}'=(\overline{x}_1',\ldots,\overline{x}_N')^T \in \Delta_{I}(\overline{x}_1,\ldots,\overline{x}_N,u_1^{\Ext},\ldots,u_N^{\Ext})$, $(\overline{x}_1',\ldots,\overline{x}_N')^T \in \Delta_{C^*}(\overline{x}_1,\ldots,\overline{x}_N,u_1^{\Ext},\ldots,u_N^{\Ext})=\Delta_{C^*}(x_1,\ldots,x_N,u_1^{\Ext},\ldots,u_N^{\Ext})$. Hence there exists $x' \in \Delta_{C^*}(x_1,\ldots,x_N,u_1^{\Ext},\ldots,u_N^{\Ext})$ such that $x'=\overline{x}'$. Then, $(x',\overline{x}') \in \mathcal{R}$ and condition (iii) of Definition~\ref{Def:altsimu} is satisfied. 
	
	Since $C^*$ is the maximal safety controller for $S_M|C^*$ and safety specification $\mathfrak S$ and from the definition of the alternating simulation relation~\cite{tabuada2009verification}, we have that $C^*$ is a safety controller for the system $\langle S_i|C_i \rangle_{i\in I}^{M,\mathcal{I}}$ and specification $\mathfrak S$. Then from maximality of $C_I$, we have that $C^*(x) \subseteq C_I(x)$ for all $x \in X$. Then for all $x \in X$, $C^*(x)=C_I(x)$.
\end{proof}

In order to clarify the link between the results of Sections~\ref{sec:compo} and~\ref{sec:control}, the following remark is in order.
\begin{remark}
	Given a system $S$ made of $N$ interconnected components $S_i$, $i \in I$, we have the following scenarios:
	\begin{itemize}
		\item  If we have a decomposable safety specification, $\mathfrak S=\mathfrak S_1\times \ldots\times \mathfrak S_N$, we start by constructing a local abstraction $\hat{S}_i$ for each component $S_i$, synthesize a local safety controller $C_i$ for each local abstraction $\hat{S}_i$ and safety specification $\mathfrak S_i$, compose the local controlled abstractions (by computing $\langle \hat{S}_i|C_i \rangle_{i\in I}^{M,\mathcal{I}}$), using the result of Theorems~\ref{thm:altsimulation} and~\ref{thm:simulation}, and then synthesize a global safety controller for $\langle \hat{S}_i|C_i \rangle_{i\in I}^{M,\mathcal{I}}$ against the safety specification $\mathfrak S$, while ensuring in view of Theorem~\ref{thm:completeness} the completeness of the proposed bottom-up synthesis approach with respect to the monolithic abstraction-based synthesis (cf. traffic flow example in Subsection~\ref{examp:traffic}).
		\item If we have any other type of specification $\mathfrak S$, such as reachability, stability, non-decomposable safety specification or more general LTL formula, we start by constructing a local abstraction $\hat{S}_i$ for each component $S_i$, use the result of Theorems~\ref{thm:altsimulation} and~\ref{thm:simulation} to compute a global abstraction $\hat{S}=\langle \hat{S}_i|C_i \rangle_{i\in I}^{M,\mathcal{I}}$ from the local ones, and then synthesize the controller monolithically (cf. DC microgrids example in Subsection~\ref{examp:microgrid}).
	\end{itemize}
\end{remark}

\section{Numerical examples}
\label{sec:examples}

{In this section, we demonstrate the effectiveness of the proposed approach on two control problems: a DC microgrid and a road traffic control problem. The objective of the first example is to illustrate the speed-up that can be attained using the compositional abstraction framework proposed in Section~\ref{sec:compo}. In the second example, we show how the proposed framework can be applied to a more complex example, on which different abstraction techniques are used for different components to show the efficacy of proposed results in Section~\ref{sec:compo}. Additionally, we will also show the benefits of the bottom-up safety synthesis approach proposed in Section~\ref{sec:control}.} In the following, the numerical implementations have been done in MATLAB and a computer with processor 2.7 GHz Intel Core i5, Memory 8 GB 1867 MHz DDR3.

\subsection{DC microgrids}
\label{examp:microgrid}
Direct-current (DC) microgrids have been recognized as a promising choice for the redesign of distribution systems, which are undergoing relevant changes due to the increased penetration of photovoltaic modules, batteries and DC loads. A microgrid is an electrical network gathering a
combination of generation units, loads and energy storage
elements.
Here, we use the DC microgrid model proposed in~\cite{zonetti2019symbolic}.

\subsubsection{Model description and control objective}

We represent a microgrid as a directed graph  $\mathcal{G}(\mathcal{N},\mathcal{E},\mathcal{B})$,  where: $\mathcal{N}$ is the set of nodes, with cardinality $n$; $\mathcal{E}$ is the set of edges, with cardinality $ t$ and $\mathcal{B}\in\mathbb{R}^{n\times t}$ is the incidence matrix capturing the graph topology. The edges correspond to the transmission lines, while the nodes correspond to the buses where the power units are interfaced. 
The weighted interconnection topology is equivalently captured by the Laplacian matrix $\mathcal{L}:=\mathcal{B}G_T\mathcal{B}^\top\in\mathbb{R}^{n\times n}$, with $G_T:=\mathrm{diag}(G_e)\in\mathbb{R}^{t\times t}$, where $G_e$ denotes the conductance associated to the edge $e\in\mathcal{E}$. We further define $\mathcal{N}_{S}$ as the subset of nodes associated to controllable power units (sources), \textit{i.e.} the generation and energy storage units, with cardinality $m$, and $\mathcal{N}_L$ as the subset of nodes associated to non-controllable power units (loads), with cardinality $n-m$. The interconnected dynamics of the voltage buses are:
\begin{equation}\label{eq:compactagent}
C\dot V=-(\mathcal{L}+G)V+\sigma,
\end{equation}
where $V:=\mathrm{col}(v_i)\in{\mathbb{R}_{0}^+}^n$ denotes the collection of (positive) bus voltages, $\sigma:=\mathrm{col}(\sigma_i)\in\mathbb{R}^{n}$ denotes the collection of input currents and $C:=\mathrm{diag}(C_i)\in\mathbb{R}^{n\times n}$, $G:=\mathrm{diag}(G_i)\in\mathbb{R}^{n\times n}$ are matrices denoting the bus capacitances and conductances. Input currents are given by:
\begin{equation}\label{eq:current_injection}
{\sigma}_i=(\mathsf (1-b_i)P_i+b_iu_i)/v_i,\qquad i\in\mathcal{N},
\end{equation}
with: control input  $u_i\in\mathcal{U}_i$, where $\mathcal{U}_i:=[\underline{u}_i,\overline{u}_i]\subset\mathbb{R}^+$; $b_i\in\{0,1\}$, where $b_i=1$, if $i\in\mathcal{N}_S$ and $b_i=0$ otherwise; and $P_i$ is a bounded time-varying demand $P_i \in \mathcal{P}_i=[\underline{P}_i,\overline{P}_i]$. By replacing \eqref{eq:current_injection} into \eqref{eq:compactagent}, the overall system can be rewritten in compact form as
\begin{equation}\label{eq:microgrid}
\dot{V} \in  f(V,u) =  -C^{-1}\left[(\mathcal L +G)V+\begin{bmatrix}
u\\
\mathcal{P}
\end{bmatrix}\oslash V\right]
\end{equation}
with state vector $V\in{\mathbb{R}_{0}^+}^n$; control input $u\in\mathcal U$, where $\mathcal{U}:=\prod_{i }\mathcal{U}_i$; disturbance input  $\mathcal{P}:= \prod_{i}\mathcal{P}_i$; and where $\oslash$ denotes the element-wise division of matrices.

\medskip

The safe set is given by $\mathfrak{S}=[V^\mathrm{nom}-\delta, V^\mathrm{nom}+\delta]^n$ and means that the voltage $V$ of the system need to be kept sufficiently near to the nominal value $V_{nom}>0$ up to a given precision $\delta>0$.

\subsubsection{Abstraction and controller synthesis}
We consider a five-terminal DC microgrid as the one depicted in Figure~\ref{fig:grid}. We assume that two units, namely Units  $2$ and  $3$, are equipped with a primary control layer, while the remaining three units, Units  $1$,  $4$, and $5$ correspond to loads with demand varying steadily around a constant power reference. The latter can be thus interpreted as constant power loads affected by noise. 
\begin{figure}[t!]
	\centering
	\includegraphics[scale=0.6]{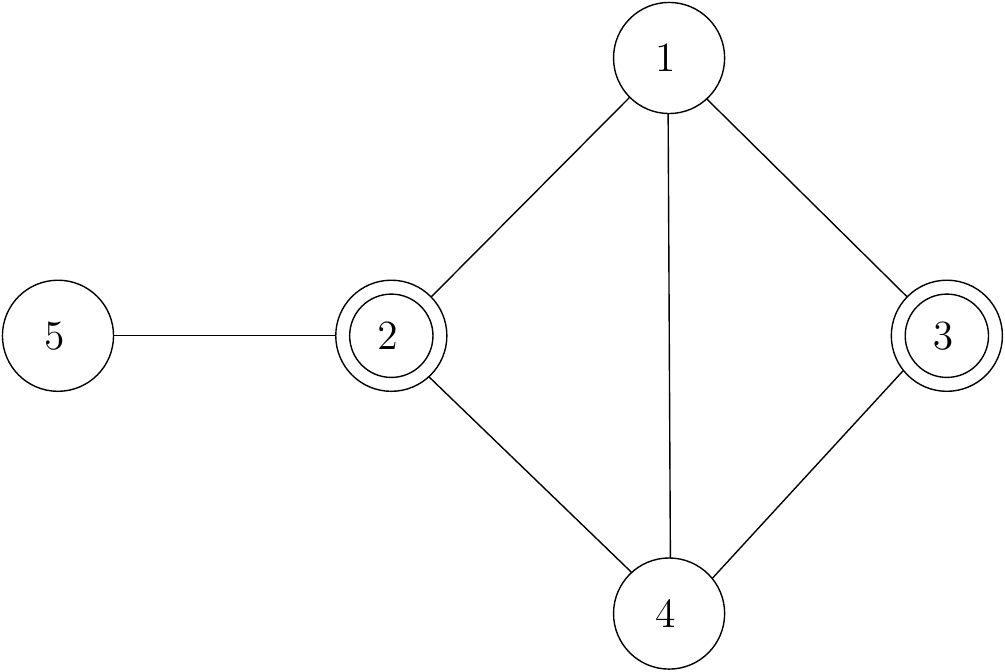}
	\caption{The five units architecture used for the simulations. Circles correspond to loads and sources are denoted by double circles. Solid lines denote the transmission lines.}
	\label{fig:grid}
\end{figure}
The considered bus parameters are $C_1=2.2~\mathrm{\mu F}$, $C_2=1.9~\mathrm{\mu F}$, $C_3=1.5~\mathrm{\mu F}$, $C_4=C_5=1.7~\mathrm{\mu F}$ and the network parameters are $G_{12}=5.2~\Omega^{-1}$, $G_{13}=4.6~\Omega^{-1}$, $G_{14}=4.5~\Omega^{-1}$, $G_{24}=6~\Omega^{-1}$, $G_{25}=3.1~\Omega^{-1}$, $G_{34}=5.6~\Omega^{-1}$, $G_{15}=G_{23}=G_{35}=G_{45}=0~\Omega^{-1}$.
\begin{figure}[t!]
	\centering
	\includegraphics[scale=0.5]{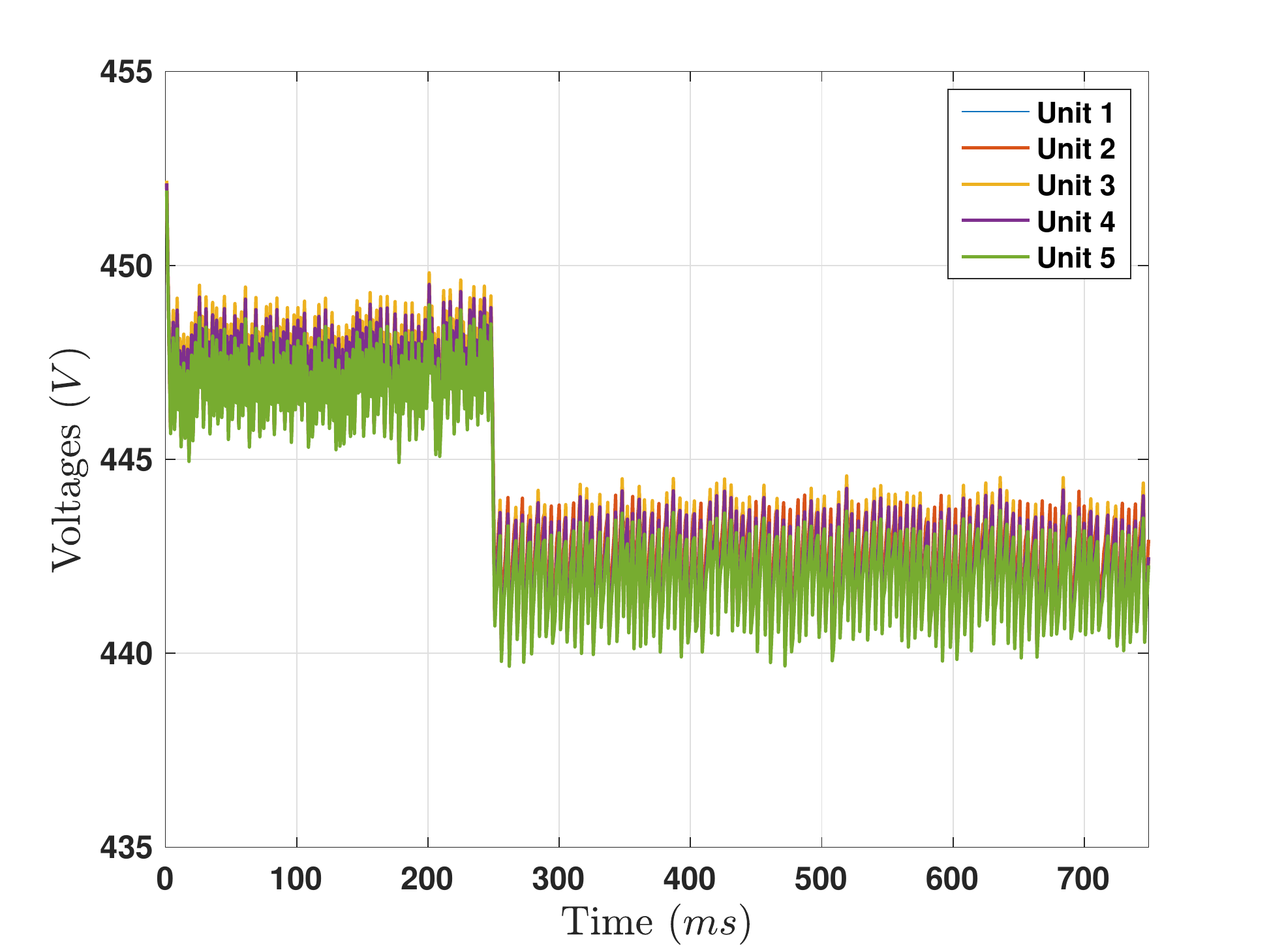} \caption{Voltage responses of the five units.}
	\label{fig:5units}
\end{figure}
The system is supposed to operate within a region with grid nominal voltage $V^\mathrm{nom}=450\; \mathrm V$ and $\delta=0.025 V^\mathrm{nom}$. We use the symbolic approach presented in~\cite{meyer2015adhs}, while exploiting the monotonicity property of the DC grid~\cite{zonetti2019decentralized}, we select sampling period for the abstractions $\tau=0.1$ milliseconds, which corresponds to the clock of the controller to be designed. Discretization parameters are  $n_d=5$ and $n_u=5$ denoting the number of discrete states and inputs, respectively, for each dimension. 

We consider two scenarios. In the first case, we assume that Unit $5$ is disconnected from the grid and the grid is made of $4$ units $I=\{1,2,3,4\}$. We compute local abstraction $\hat{S}_i$ for each Unit $S_i$, $i \in I$, each abstraction $\hat{S}_i$ is related to the original system $S_i$, $i\in I$, by an $(\varepsilon_i,\mu_i)$-approximate alternating simulation relation, with $\varepsilon_i=4.5$ and $\mu_i=0$. We then compose the local abstractions in order to compute the global abstraction using an $\hat{M}$-approximate composition, with $\hat{M}=(4.5, 4.5, 4.5, 4.5)$. Hence, in view of Theorem~\ref{thm:altsimulation}, we have that $\hat{S}\preccurlyeq^{4.5,0}_{\mathcal{AS}}S$ \footnote{{Given the safety specification for the original system $\mathfrak{S}$ and since the original system is related to the compositional abstraction by an $\varepsilon-$approximate alternating simulation relation, the abstract specification is a deflated version of the original one, and is given by $\mathfrak{\hat{S}}=[V^\mathrm{nom}-\delta+\varepsilon, V^\mathrm{nom}+\delta-\varepsilon]^n$}.}, where $S=\langle S^i \rangle_{i\in I}^{\mathbf{0}_4,\mathcal{I}}$ and $\hat{S}=\langle \hat{S}_i \rangle_{i\in I}^{\hat{M},\mathcal{I}}$.

The computation time of the abstractions of the four components $\{1,2,3,4\}$ are given by $5$ seconds, $9$ seconds, $8$ seconds and $4$ seconds, respectively, and the composition of the global abstraction from local ones using an approximate composition takes $15$ seconds. This resulted in $41$ seconds to compute an abstraction compositionally. Constructing an abstraction for the full model
monolithically, using the same discretization parameters, took
$154$ seconds. Hence, the proposed compositional approach was
three times faster in this scenario.

In the second scenario, Unit $5$ is connected to the grid, we use the same numerical parameters as in the first scenario. In this case, the computation time of the abstraction of the five components $\{1,2,3,4,5\}$ are given by $5$ seconds, $43$ seconds, $8$ seconds, $4$ seconds, and $3$ seconds, respectively, and the composition of the global abstraction from local ones using an approximate composition takes $32$ minutes. Let us mention that with comparison to the previous scenario (where only Units 1 to 4 are considered), only the computation time of Unit 2 is modified, since it is the only Unit connected to Unit 5 (see Figure~\ref{fig:grid}). Using the same numerical values, the direct computation of the monolithic abstraction takes $13$ hours, which shows the practical speedups that can be attained using the compositional approach. For the same example, the number of controllable states for the monolithic and compositional schemes is the same and is given by 3125. The size of resulting controllers (number of transitions) for the monolithic and compositional schemes is given by 18252 and 11040, respectively. Hence, it can be seen that the speed-up of the computation of the abstraction using the proposed compositional approach here results in losing the abstraction precision in comparison with the monolithic one, which is observed from the size of the obtained controllers.

We then synthesize a safety controller for the computed abstraction. The synthesis of the symbolic controller takes $30$ seconds. To validate our controller, we assume that the load power demands for Unit 1, Unit 4 and Unit 5 are as follows. Unit 1 is demanding $0.3\;\mathrm{kW}$ from $0$ to $250$ milliseconds, immediately after stepping up to $1\;\mathrm{kW}$. Unit 4 on the other hand is supposed to be characterized by a demand of $0.3\;\mathrm{kW}$ from $0$ to $250$ milliseconds, then a constant demand of $1\; \mathrm{kW}$ from $250$ milliseconds to $750$ milliseconds. Finally, Unit 5 is characterized by a demand of $0.4\;\mathrm{kW}$ from $0$ to $250$ milliseconds, then a constant demand of $1\; \mathrm{kW}$ from $250$ milliseconds to $750$ milliseconds. All demands are affected by small noise. Source power injections are positive and both limited at $8\; \mathrm {kW}$. The controller is implemented via a microprocessor of clock period $\tau=0.1$ milliseconds. Voltage responses for different units are illustrated in Figure \ref{fig:5units}. As expected, the controller guarantees that voltages are kept sufficiently near the nominal value. 
\subsection{ Traffic flow model}
\label{examp:traffic}
For the second example, we consider a popular macroscopic
traffic flow model used for the analysis of highway traffic behavior is the cell transmission model (CTM) introduced in \cite{daganzo1994cell}.
\subsubsection{Model description and control objective}
Here we consider the traffic flow model as shown schematically in Figure \ref{fig:traffic} and described by 
\begin{align}
x_1(k\hspace{-0.2em}+\hspace{-0.2em}1)&\hspace{-0.2em}=\hspace{-0.2em}(1\hspace{-0.2em}-\hspace{-0.2em}\frac{Tv}{1.6l})x_1(k)\hspace{-0.2em}+\hspace{-0.2em}5u_1(k),\nonumber\\
x_2(k\hspace{-0.2em}+\hspace{-0.2em}1)&\hspace{-0.2em}=\hspace{-0.2em}\frac{Tv}{l}x_1(k)\hspace{-0.2em}+\hspace{-0.2em}(1\hspace{-0.2em}-\hspace{-0.2em}\frac{Tv}{l}\hspace{-0.2em}-\hspace{-0.2em}q)x_2(k)\hspace{-0.2em}+\hspace{-0.2em}\frac{Tv}{l}x_4(k),\nonumber\\
x_3(k\hspace{-0.2em}+\hspace{-0.2em}1)&\hspace{-0.2em}=\hspace{-0.2em}\frac{Tv}{l}x_2(k)\hspace{-0.2em}+\hspace{-0.2em}(1\hspace{-0.2em}-\hspace{-0.2em}\frac{Tv}{l}\hspace{-0.2em}-\hspace{-0.2em}q)x_3(k)\hspace{-0.2em}+\hspace{-0.2em}8u_2(k),\nonumber\\
x_4(k\hspace{-0.2em}+\hspace{-0.2em}1)&\hspace{-0.2em}=\hspace{-0.2em}\frac{Tv}{l}x_3(k)\hspace{-0.2em}+\hspace{-0.2em}(1\hspace{-0.2em}+\hspace{-0.2em}\frac{Tv}{l}\hspace{-0.2em}-\hspace{-0.2em}q)x_4(k)\hspace{-0.2em}+\hspace{-0.2em}8u_3(k),
\end{align}
\begin{figure}[t]
	\centering
	\includegraphics[scale=0.11]{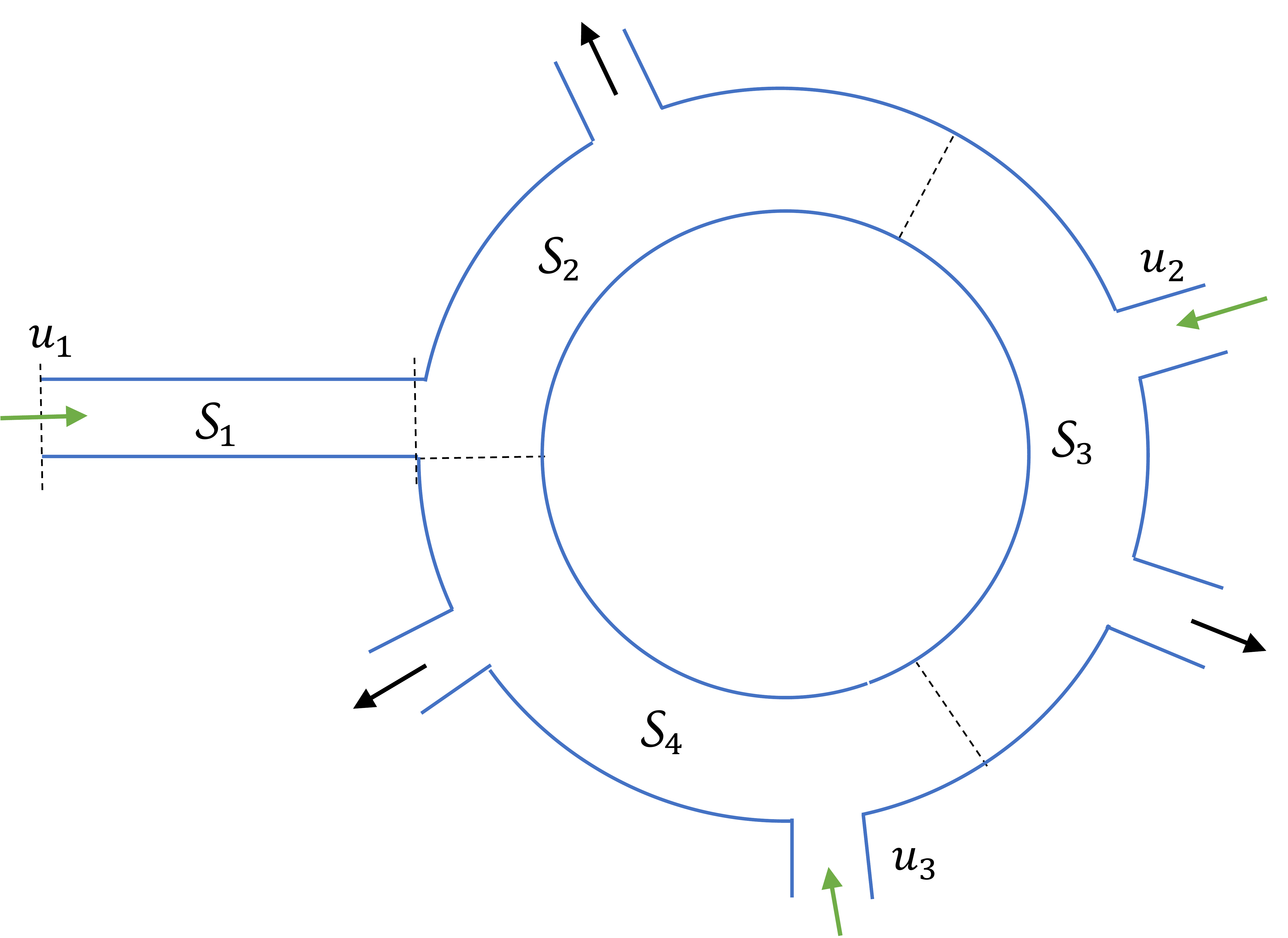} \caption{Traffic flow model schematic.}
	\label{fig:traffic}
	\vspace{-0.3cm}
\end{figure}
where the state $x_i$, $i\in\{1,2,3,4\}$, represents the density of traffic in $i^{th}$ section of road given in vehicles per section, $l=0.25\ km$ is the length of road, $v=70\ km/hr$ is the flow speed, $T=\frac{10}{3600}$ hours is the discrete time interval, and $q=0.25$ is the ratio representing the percentage of vehicles leaving the section of road. The control inputs $u_1,u_2,u_3\in U=\{0,1\}$, where 0 represents red signal and 1 represents green signal in the traffic model. We consider the compact state-space $X=[0,30]^4$. The control objective is to synthesize controller to keep states in a safe region given by $\mathfrak{S}=[2,25]\times[5,25]^3$.
\subsubsection{Abstraction and controller synthesis}
We consider four subsystems $S_i$, $i \in I=\{1,2,3,4\}$, corresponding to four sections in traffic network model. To demonstrate the effectiveness of the proposed result on bottom-up safety controller synthesis, we compare results obtained using monolithic safety synthesis and bottom-up safety synthesis on compositional abstractions. 
For constructing abstractions, we construct an $\varepsilon$-approximate bisimilar abstraction of $S_1$ called $\hat{S}_1$ using state-space discretization-free abstraction techniques as discussed in \cite{le2013mode,zamani2017towards} using tool QUEST \cite{jagtap2017quest}. For construction of $\hat S_1$, with precision $\varepsilon_1=0.0016$ and $\mu_1=0$, we consider $\mathsf{U}=\{0,1\}$, length of input sequence $N=8$, and source state $x_s=10$ (for description and computation of parameters $\varepsilon_1$ and $N$, see \cite{le2013mode} and \cite{zamani2017towards}). The abstractions $\hat S_i$, $i\in\{2,3,4\}$, are computed by utilizing partitions of the state-space as shown in~\cite{meyer2015adhs,reissig2017feedback}, each abstraction $\hat{S}_i$ is related to the original component $S_i$ by an $(\varepsilon_i,\mu_i)$-approximate alternating simulation relation, with $\varepsilon_i=0.1$ and $\mu_i=0$. Note that since the subsystem $S_1$ is incrementally input-state stable and do not have any internal input, the input set of the component $S_1$ is much smaller compared to other components. In such a scenario, state-space discretization-free abstractions are efficient compared to state-space discretization based abstractions (see Section 4.D in~\cite{le2013mode} and Section 5.4 in \cite{zamani2017towards} for detailed discussion). 

Monolithic and proposed bottom-up approaches to safety synthesis are then compared. In the first one, we compute the global compositional abstraction $\hat{S}=\langle \hat{S}_i \rangle_{i\in I}^{\hat{M},\mathcal{I}}$ by composing local abstractions with a composition parameter $\hat M=(0.0016,0.1,0.1,0.1)$. We then monolithically synthesize a safety controller for the global abstraction $\hat{S}$ with the safe set { $\hat{\mathfrak{S}}=[2+\varepsilon,25-\varepsilon]\times[5+\varepsilon,25-\varepsilon]^3$, with $\varepsilon=\max_{i\in \{1,2,3,4\}}\varepsilon_i=0.1$,} using maximal fixed point algorithm \cite{tabuada2009verification}. The total computation time required for obtaining the monolithic safety controller is $9$ hours and $20$ minutes.

In the second approach, we start from the local abstractions $\hat{S}_i$, $i \in I$, and first compute safety controllers $C^*_i$ for each abstraction $\hat S_i$, $i\in I$, with local safety specification $\hat{\mathfrak{S}}_i$, {where $\hat{\mathfrak{S}}_1=[2+\varepsilon_1, 25-\varepsilon_1]=[2.0016,24.9984]$ and $\hat{\mathfrak{S}}_j=[2+\varepsilon_j, 25-\varepsilon_j]=[5.1,24.9]$ for $j\in \{2,3,4\}$}. Then we compose the local controlled components $S_i|C_i^*$, $i\in I$, with an $\hat M$-approximate composition with $\hat M=(0.0016,0.1,0.1,0.1)$ given as $\langle S_i|C_i^* \rangle_{i\in I}^{\hat M,\mathcal{I}}$. Then, as a final step we synthesize a safety controller for $\langle S_i|C_i^* \rangle_{i\in I}^{\hat M,\mathcal{I}}$ against the safety specification $\hat{\mathfrak{S}}$. The total computation time required for obtaining the safety controller using bottom-up safety synthesis is $2$ hours and $25$ minutes which is almost $4$ times faster than the monolithic synthesis case. The Figure \ref{fig:resp_traffic} shows the evolution of traffic densities in each section of the road starting from the initial condition $x=[14,15,20,16]$ using controller obtained by proposed bottom-up approach. One can readily see that all the trajectories evolve within the safe region.   
\begin{figure}[t!]
	\centering
	\includegraphics[scale=0.40]{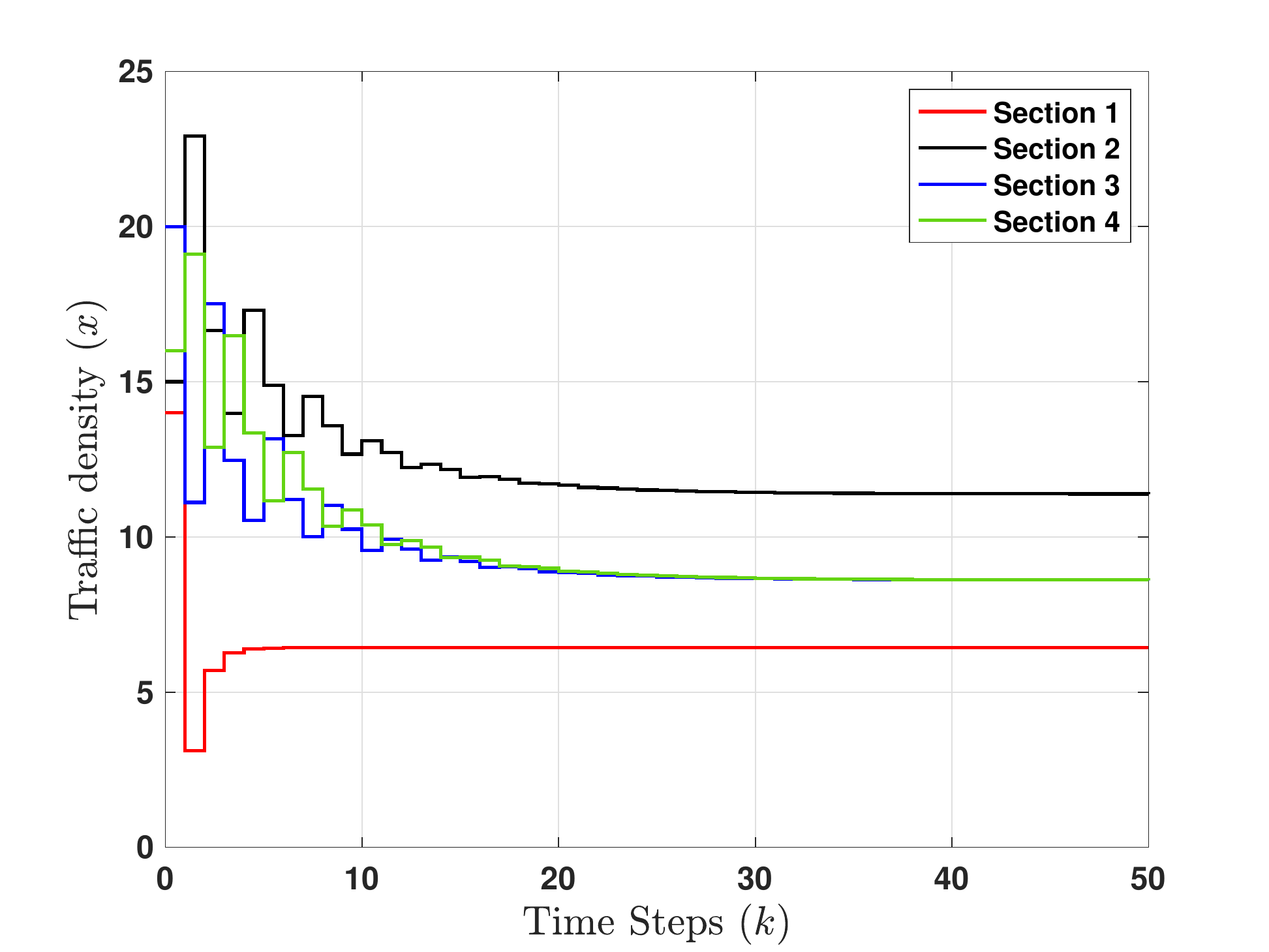} \caption{The evolution of traffic densities in each section of the road.}
	\label{fig:resp_traffic}
	\vspace{-0.3cm}
\end{figure}

\section{Conclusion}
In this paper, we proposed a compositional abstraction-based synthesis approach for interconnected systems. We introduce a notion of approximate composition that allows composing different types of abstractions. Moreover, we provided compositional results based on approximate (alternating) simulation relation and showed how these results can be used for bottom-up safety controller synthesis. Two case studies are given to show the effectiveness of our approach. In future work, we plan to extend the bottom-up synthesis approach from safety to other types of specifications, such as reachability, stability, or more general properties described by temporal logic formulae. Another direction is to go from deterministic relationships to probabilistic ones~\cite{abate2013approximation}, which are more suitable to use when dealing with stochastic systems.

\bibliographystyle{alpha}
\bibliography{references}

\end{document}